\def\fskip#1{}
\newtheorem{theorem}{Theorem}
\newtheorem{definition}{Definition}
\newtheorem{lemma}{Lemma}
\newtheorem{remark}{Remark}
\def\1{{\bf 1}}
\newcommand{\remove}[1]{}
\newcommand*{\TitleFont}{%
      \usefont{\encodingdefault}{\rmdefault}{}{n}%
      \fontsize{19}{18}%
      \selectfont}
\algrenewcommand\ALG@beginalgorithmic{\footnotesize}
\begin{document}
\title{{\TitleFont Managing Price Uncertainty in Prosumer-Centric Energy Trading: A Prospect-Theoretic Stackelberg Game Approach}}
\author{\authorblockN{\large Georges El Rahi, S. Rasoul Etesami, Walid Saad, Narayan Mandayam, and H. Vincent Poor}\vspace{-1cm}\\
\thanks{Georges El Rahi and Walid Saad are with Department of Electrical and Computer Engineering, Virginia Tech, Blacksburg, VA email: (gelrahi,walids)@vt.edu.}
\thanks{S. Rasoul Etesami and is with Department of Industrial Engineering, University of Illinois at Urbana-Champaign, IL (email: etesami1@illinois.edu).}
\thanks{Narayan Mandayam is with Department of Electrical Engineering, Rutgers University,
North Brunswick, NJ (e-mail: narayan@winlab.rutgers.edu).}
\thanks{H. Vincent Poor is with Department of Electrical Engineering, Princeton University,
Princeton, NJ (e-mail: poor@princeton.edu).}
\thanks{This research was supported in part by the NSF under Grants ECCS-1549894, ACI-1541105, ECCS-1549881, CNS-1446621, and ECCS 1549900.}}
\maketitle
\thispagestyle{empty}
\pagestyle{empty}

\begin{abstract}

In this paper, the problem of energy trading between smart grid prosumers, who can simultaneously consume and produce energy, and a grid power company is studied. The problem is formulated as a single-leader, multiple-follower Stackelberg game between the power company and multiple prosumers. In this game, the power company acts as a leader who determines the pricing strategy that maximizes its profits, while the prosumers act as followers who react by choosing the amount of energy to buy or sell so as to optimize their current and future profits. The proposed game accounts for each prosumer's subjective decision when faced with the uncertainty of profits, induced by the random future price. In particular, the  framing effect, from the framework of prospect theory (PT), is used to account for each prosumer's valuation of its gains and losses
with respect to an individual utility reference point. The reference point changes between prosumers and stems from their past
experience and future aspirations of profits. The followers' noncooperative game is shown to admit a unique pure-strategy Nash equilibrium (NE) under  classical game theory (CGT) which is obtained using a fully distributed algorithm. The results are extended to account for the case of PT using algorithmic solutions that can achieve an NE under certain conditions. Simulation results show that the total grid load varies significantly with the prosumers' reference point and their loss-aversion level. In addition, it is shown that the power company's profits considerably decrease when it fails to account for the prosumers' subjective perceptions under PT.  

\end{abstract}

\section {Introduction}
One key enabler of smart grid energy trading and management schemes  is the presence of \emph{prosumers}, i.e., smart grid customers capable of generating and storing their own energy. Indeed, the notable increase in penetration of solar photovoltaic (PV) panels and storage devices at the prosumer side of the grid will lead to novel demand-side energy management (DSM) schemes that will help alleviate the extremely expensive peak consumption hours and match consumption demand to the intermittent renewable 
energy supply of the grid \cite{DRDR}.

A number of recent works \cite{scutari, bitar, hamed} have studied the role of storage devices in grid energy management. For example, the authors in \cite{scutari} propose a storage scheduling and management model, with the goal of  minimizing peak hour consumption. In \cite{hamed}, consumer-based storage scheduling is proposed in an attempt to match energy consumption  to the expected power output of a central wind generation unit. In addition, the work in \cite{bitar} studies the topic of optimal offerings for wind power generators, through the application of a common storage device. Moreover, the work in \cite{chen2012real} adopts a stochastic optimization and robust optimization approach to study demand response for residential appliances under uncertain real-time prices.
  
Game-theoretic methods have been widely applied in the existing DSM literature, given the  coupled interactions between prosumers as discussed in \cite{hamed, hajj, stack, balancing, tamerb, stack2}. For instance, the authors in \cite{hajj} propose a load scheduling technique with a dynamic pricing strategy related to the total consumption of the grid. More particularly, a number of works \cite{stack, balancing, tamerb, stack2} have used the framework of Stackelberg games in order to study the hierarchical interactions between the power company and the grid's consumers. For example,  the authors in \cite{stack} propose a Stackelberg game approach to deal with demand response scheduling under load uncertainty based on real-time pricing  in a residential grid. Similarly, the authors in \cite{balancing} and \cite{stack2} use a
Stackelberg game approach between one power company and multiple users, competing to maximize their profits, with the goal of flattening the aggregate load curve. On the other hand, the authors in \cite{tamerb} propose a Stackelberg game approach between company and consumers, while studying the impact that a malicious attacker could have, through the manipulation of pricing data. In addition, the works in \cite{wei2015energy} and \cite{zugno2013bilevel} have used a Stackelberg game approach to characterize the demand response of consumers with respect to the retail price. In particular, the authors adopted stochastic and robust optimization methods to study energy trading with uncertain market prices. Moreover, a Stackelberg game for energy sharing management of microgrids with photovoltaic prosumers has been proposed in \cite{liu2017energy}.

The main drawback of these works \cite{hamed,hajj,stack,balancing, tamerb,stack2,wei2015energy,zugno2013bilevel,liu2017energy} is the assumption that consumers are fully rational and will thus choose their strategy in accordance to classical game theory (CGT). In practice, as observed by behavioral experimental studies \cite{nobel}, human players can significantly deviate from the rational principles of CGT, when faced with the uncertainty of probabilistic outcomes. In this regard, the framework of \emph{prospect theory} (PT) \cite{nobel}  has been extensively applied to model the irrational behavior of real-life individuals in an uncertain decision making scenario \cite{CPT}.
 In fact, the authors in \cite{Cyumpeng} and \cite{saadt} discuss a storage management framework where the owner can choose to store or sell energy, while accounting for its subjective perceptions, using a PT framework. In \cite{shiftingprospect}, a PT framework is used for DSM to identify optimal customer participation time. However, these works \cite{Cyumpeng, shiftingprospect, walidpf}  do not typically study the conflicting hierarchical interaction between the prosumers and the power company and instead focus on the consumer side of the grid. These works also fail to account for the uncertainty associated with variable or dynamic pricing, which is expected to play a major role in DSM \cite{DRDR}. Even though the combination of Stackelberg games and PT has been applied in other research fields including wireless communication \cite{wir1,wir2}, security games \cite{sec1,sec2}, and transport theory \cite{tran1}, such combination has not been addressed in demand-side energy management problems and from an algorithmic perspective. In particular, these earlier works mainly focus on the weighting effect of PT while here, we consider the framing effect.  
 
\subsection{Contributions}

The main contribution of this paper is a novel hierarchical framework for optimizing energy trading between prosumers and a grid power company, while explicitly taking into account the uncertainty of the future energy price. Our work differs from most of the existing literature on energy trading \cite{scutari, bitar,chen2012real,hamed,hajj,stack,balancing, tamerb,stack2,wei2015energy,zugno2013bilevel,liu2017energy,Cyumpeng,saadt,shiftingprospect,walidpf,wir1,wir2,sec1,sec2,tran1} in several aspects: 1) It models the behavior of prosumers who can both generate and consume energy under price uncertainty and using a Stackelberg game, 2) It provides a simple distributed algorithm with polynomial convergence rate to the unique Stackelberg equilibrium point, and 3) It captures the subjective decision making behavior of the prosumers using \emph{framing} effect in PT. 

In particular, we formulate a single-leader, multiple-follower Stackelberg  game, in which the power company, acts as a leader who declares its pricing strategy in order to maximize its profits, to which prosumers, acting as followers, react by choosing their optimal energy bid. We define a prosumer's utility function that captures the profits resulting from buying/selling energy at the current known price, as well as the uncertain future profits, originating from selling stored energy. In contrast to CGT, we develop a PT framework that models the behavior of prosumers when faced with the uncertainty of future profits. In particular, we account for each prosumer's valuation of its gains and losses, compared to its own individual utility evaluation perspective, as captured via  PT's framing effect \cite{nobel}, by introducing a utility reference point. This reference point represents a prosumer's anticipated profits and originates from previous energy trading transactions and future aspirations of profits, which can differ in between prosumers \cite{CPT}. We show that, under CGT, the followers' noncooperative game admits a unique pure strategy Nash equilibrium (NE). Moreover, under PT, we derive a set of conditions under which the pure strategy NE is shown to exist. In particular, we propose distributed algorithms that allow the prosumers and power company to reach an equilibrium under both CGT and PT. Simulation results show that
the total grid load, under PT, decreases for certain ranges of prosumers' reference points and increases for others, when compared to CGT. The results also highlight the impact of this variation on the  power company's profits, which significantly decrease, when it fails to account for the prosumers' subjective perceptions under PT.

The rest of this paper is organized as follows. Section II presents the system model and formulates the Stackelberg game model. In section III, we present the game solution under CGT and provide a distributed algorithm which can quickly reach the Stackelberg solution of the game.  We extend our results to games under PT in Section IV. Section V presents our simulation results, and finally conclusions are drawn in Section VI.


\section{System model}

Consider the set $\mathcal{N}$ of $N$ grid prosumers. Each prosumer $n \in  \mathcal{N}$  owns an energy storage unit of capacity $Q_{\textrm{max},n}$, and a solar PV panel which produces a daily amount of energy $W_n$. Each prosumer has a known load profile $L_n$ that must be satisfied and an initial stored energy $Q_n$ available in a storage device, originating from an excess of energy at a previous time.

In our model, the power company requires prosumers to declare the amount of energy that they will be buying or selling at the start of the period as done in day-ahead scheduling models used in DSM literature such as in \cite{hamed} and \cite{hajj}. We let $x_n$ be the amount of energy declared by prosumer $n$, where $x_n > 0$ implies an amount of energy that will be bought and $x_n < 0$ will represent the amount of energy that will be sold.  $x_n=0$ indicates that no energy is traded.

The price of selling or buying one unit of energy is related to the total energy  declared by all prosumers. In our pricing model, each prosumer is billed based on the amount that is declared.  We assume that the prosumers are truthful and have no incentive to deviate, given the possible penalties that will be incurred. Next, as done in \cite{shiftingprospect,mohsenian2010optimal}, and \cite{wang2010managing}, we choose a so-called fairness pricing for buying/selling energy which is
proportional to the prosumers' aggregate demand and given by:
\begin{align}
 \rho \left( x_n,\boldsymbol{x}_{-n} \right) = \rho_{\textrm{base}} + \alpha \sum\limits_{n \in \mathcal{N}} x_{n},
\end{align}
where  $\rho_{\textrm{base}}$ and $\alpha$ are design parameters set by the power company. For simplicity, we assume that $\alpha$ is fixed and positive, and that the company only varies $\rho_{\textrm{base}}$ to control the amount of energy bough/sold by the prosumers.\footnote{Note that in this function we can also allow the utility company to adjust $\alpha$. While this does change our analysis from the prosumers' perspective, however, it gives an extra freedom for the utility company to maximize its utility at a cost of solving a more complex optimization problem.} In (1), $\boldsymbol{x}_{-n}$ is a vector that represents the amount of energy declared by all the prosumers in the set  $\mathcal{N}  \setminus \{n\}$. The price of unit of energy $\rho$ is regulated and must be within a range $ [ \rho_{\textrm{min}}, \rho_{\textrm{max}} ]$. Here, we assume that the structure of this pricing function is pre-determined by the utility company and announced a priori to all the prosumers. This function is chosen based on the idea that a higher aggregate demand by the prosumers must naturally increase the energy prices.

The future price of energy  is perceived to be unknown by the prosumers, given the uncertainty related to future solar energy generation and the pricing strategy of the power company. The future price of energy is thus modeled by a random variable $\rho_f$. For simplicity, we assume that $\rho_f$ follows a uniform distribution $[\rho_{\textrm{min}}, \rho_{\textrm{max}} ]$. However, most of our analysis can be extended to the case in which $\rho_f$ follows more general distributions. 


\indent The set of possible values of $x_n$ for each prosumer $n$ is  $\mathcal{X}_n =\left\{   x_n \in \mathbb{R}: x_{n,\min} \leq  x_n \leq x_{n,\max}   \right\}$. $x_{n,\textrm{min}} = -W_n - Q_n +  L_n$ is a prosumer's  maximum sold/minimum bought energy.
$x_{n,\textrm{max}} = -W_n - Q_n +  L_n + Q_{\textrm{max},n}$, is the maximum energy that prosumer $n$ can purchase. For a chosen energy bid $x_n$, the prosumer's utility function will be:
\begin{align}\label{eq:1} 
U_{n} \left( {x_n,\boldsymbol{x}_{-n}}, \rho_{\textrm{base}}  \right) &=-\Big( \rho_{\textrm{base}} + \alpha ( x_n + \sum\limits_{m \in \mathcal{N} \setminus n} x_{m} ) \Big) x_n\cr
&\qquad+\left( W_n + Q_n -L_n + x_n \right) \rho_f.  
\end{align}
In \eqref{eq:1}, the first term represents the revenue/cost of prosumer $n$ at the current time, while the second term represents the future monetary value associated with unsold energy. In particular, $W_n + Q_n - L_n + x_n$ is the amount of energy that prosumer $n$ will have in its storage in the future. The prosumers' actions are coupled through the energy price and they will thus be competing to maximize their respective revenues. On the other hand, the power company will purchase (sell) the energy bough (sold) by the prosumers in the energy market at the current market clearing price $\rho_{\textrm{mar}}$. Given the current market price, the power company's utility function is given by:

\vspace{-0.3cm}
\begin{small}
\begin{align}
U_{pc}(\boldsymbol{x}, \rho_{\textrm{base}} ) =  \left( \rho_{\textrm{base}}  + \alpha  \sum\limits_{n \in \mathcal{N} } x_{n} \right) \sum\limits_{n \in \mathcal{N} } x_{n} - \rho_{\textrm{mar}}  \sum\limits_{n \in \mathcal{N} } x_{n},
\end{align}
\end{small}where the first term represents the revenue that the utility company earns by selling (buying) $\sum_n x_n$ energy units to prosumers at the price of $ \rho_{\textrm{base}}  + \alpha  \sum_{n} x_{n}$, while the second term is the cost of purchasing $\sum_n x_n$ energy units at the clearing price of $\rho_{\textrm{mar}}$ from the energy market.
  
The power company's revenues are clearly affected by the prosumers and their energy bids. On the other hand, since the prosumers react to the power company's choice of $\rho_{\textrm{base}}$, the prosumers' utility is directly affected by the power company's action. We thus model the energy trading problem as a hierarchical Stackelberg game \cite{walid} with the power company acting as leader, and the prosumers acting as followers.


%
\subsection{Stackelberg game formulation}
We formulate a  single-leader, multiple-follower  Stackelberg game \cite{walid}, between the power company and the prosumers. The power company (leader), will act first by choosing $\rho_{\textrm{base}}$ to maximize its profits. The prosumers, having received the power company's pricing strategy, will engage in a noncooperative game. In fact, the final price of energy is proportional to the grid's total load, to which each prosumer contributes. We first formulate the prosumers' problem under CGT as follows:
\begin{flalign}
 \underset{x_n}{\textrm{max}} \,\,  U^{\textrm{CGT}}_n(\boldsymbol{x},\rho_{\textrm{base}}): =\, & \mathbb{E}_{\rho_{f}}[U_n\left( {x_n,\boldsymbol{x}_{-n}}, \rho_{\textrm{base}}  \right)] \\
							   & \textrm{s.t}   \,\,x_n \in \left[ x_{\textrm{min},n}, x_{\textrm{max},n}   \right].    \nonumber                        
\end{flalign}

In (4), prosumer $n$ attempts to maximize its expected profits, given the actions of other prosumers and the power company. The previous formulation assumes all prosumers to be rational expected utility maximizers. Moreover, the power company's problem will be:
\begin{align}
 &\underset{\rho_{\textrm{base}}} {\textrm{max}} \,\,  U_{pc}(\boldsymbol{x}, \rho_{\textrm{base}}), \,\,\,  
							   \textrm{s.t}   \,\, \rho_{\textrm{base}} \in \left[ \rho_{\textrm{min}}, \rho_{\textrm{max}}   \right].                         
\end{align}
To solve this game, one suitable concept is that of a Stackelberg equilibrium (SE) as the  game-theoretic solution of our model.
  
\begin{definition}\label{def:stackelberg}
A strategy profile ($\boldsymbol{x}^*, \rho_{\emph{base}}^*$) is  a \emph{Stackelberg equilibrium} if it satisfies the following conditions:
\begin{flalign} \label{eq:6}
\nonumber U^{\rm CGT}_{n} ({x}_{n}^* ,\boldsymbol{x}_{-n}^*, \rho_{\emph{base}}^*) &\geq   U^{\rm CGT}_{n}({x}_{n},\boldsymbol{x}_{-n}^*, \rho_{\emph{base}}^*) \ \ \ \forall n \in \mathcal{N}, \\
\underset{\boldsymbol{x}^*} {\emph{min}} \, U_{pc} (\boldsymbol{x}^*, \rho_{\emph{base}}^*) &=  \underset{\rho_{\emph{base}}} {\emph{max}}\, \underset{\boldsymbol{x}^*} {\emph{min}} \, U_{pc} (\boldsymbol{x}^*, \rho_{\emph{base}}),
\end{flalign}
where $\boldsymbol{x}_n^*$ is the solution to problem $(4)$ for all prosumers in $\mathcal{N}$,  and  $\rho_{\emph{base}}^*$  is the solution to problem $(5)$.
\end{definition}

\begin{remark}
Note that in Definition \ref{def:stackelberg}, in the case where the followers' problem admits a unique solution $\boldsymbol{x}^*$, the second condition in (\ref{eq:6}) reduces to $U_{\textrm{pc}} (\boldsymbol{x}^*, \rho_{\emph{base}}^*) = \max_{\rho_{\rm base}} U_{\textrm{pc}} (\boldsymbol{x}^*, \rho_{\emph{base}})$. 
\end{remark}

It is worth noting that our Stackelberg formulation is based on a \emph{static} game. However, our solution to this problem is based on the notion of a repeated game approach in which the prosumers frequently interact with the utility company in order to find their equilibrium strategies. This is practically important as it is a step toward analyzing more complex senarios in which the smart grid's environment dynamically changes from one time instant to the other. For instance, one can consider a multi-stage game where for each prosumer $n$, $W_n$ and $L_n$ dynamically vary with time $t$, while $Q_n$ is affected by $W_n, L_n$, and action $x_n$ taken at previous and current times. As such, to capture this dynamic nature, our proposed static game model can be expanded to a dynamic \emph{stochastic game} \cite{fink1964equilibrium,shapley1953stochastic,etesami2016stochastic} with transition equations describing the evolution of the states, corresponding to $W_n(t)$, $L_n(t)$, and $Q_n(t)$, with respect to time depending on the control inputs $x(t):=[x_1(t),...,x_n(t)]$ and previous states. In this case, the state of the game at time $t$ consists of $W_n(t)$, $L_n(t)$, and $Q_n(t)$ for prosumer $n$ at stage $t$. The chosen optimal action by each player at time $t$, i.e. control inputs, would depend on the state at which the game is. In this respect, our static game analysis here provides analytical and algorithmic solution approaches through which optimal strategies, for the prosumers and energy company, can be obtained at the \emph{stationary} states of the stochastic game.\footnote{In a stochastic game framework, a stationary strategy consists of obtaining the optimal strategy for a certain state regardless of the history of the game or the particular time instant at which the game is played.} In other words, our repeated single stage game analysis can be viewed as a solution to the multi-stage stochastic game which has reached its stationary condition (i.e., at the stationary state, it appears as if one is repeatedly playing the same stationary game). In particular, the optimal stationary strategies can be extracted from our static game analysis under the stationarity condition.

\section{Game Solution under CGT}

The analysis under CGT assumes that all prosumers are expected utility maximizers. Thus, we seek to find a solution that solves both problems $(4)$ and $(5)$, while satisfying (\ref{eq:6}). First, we start by solving the follower's problem while assuming the the leader's action is fixed to $\rho_{\textrm{base}}$. We now introduce the following notations:
\begin{flalign}
&\theta:=\rho_{\textrm{base}}-\frac{\rho_{\max}+\rho_{\min}}{2},\ \ \  \bar{x}_{-n}:=\sum_{k\neq n}x_k, \nonumber \\
&\delta_n:=(W_n+Q_n-L_n)\frac{\rho_{\max}+\rho_{\min}}{2}, \ \ n\in\mathcal{N}.\nonumber
\end{flalign}
Here, the expected utility of prosumer $n\in\mathcal{N}$ will be:
\begin{align}\label{eq:utility-modified}
U^{\textrm{CGT}}_{n}(x_n,\bar{x}_{-n},\rho_{\rm base})=-\alpha x^2_n-(\theta+\alpha\bar{x}_{-n})x_n+\delta_n.
\end{align}

Next, we denote by $x^{r}_{n}$ the best response of player $n$, which is the solution of problem $(4)$, given that all the other players choose a specific strategy profile $\boldsymbol{x}_{-n}$. The following theorem explicitly characterizes the best response of each prosumer $n$.

\begin{theorem}\label{thm:best-response}
The best response of player $n$ is given by:
\begin{flalign}\label{eq:best-response}
x^{r}_{n}(\bar{x}_{-n})&= \begin{cases} -\frac{\theta}{2\alpha}-\frac{\bar{x}_{-n}}{2} & \mbox{if } -\frac{\theta}{2\alpha}-\frac{\bar{x}_{-n}}{2}\in[x_{n,\min},x_{n,\max}], \\
x_{n,\min} & \mbox{if } -\frac{\theta}{2\alpha}-\frac{\bar{x}_{-n}}{2}\leq x_{n,\min},\\
 x_{n,\max} & \mbox{\emph{else}.}\end{cases}
\end{flalign}  
\end{theorem}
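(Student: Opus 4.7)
The plan is to observe that, with $\bar{x}_{-n}$ and $\rho_{\textrm{base}}$ held fixed, the expected utility in (\ref{eq:utility-modified}) is a strictly concave univariate quadratic in the scalar $x_n$, so its constrained maximizer over the compact interval $[x_{n,\min},x_{n,\max}]$ is simply the Euclidean projection of the unconstrained maximizer onto that interval.

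First I would note that since $\alpha>0$, the second derivative $\partial^2 U^{\textrm{CGT}}_{n}/\partial x_n^2 = -2\alpha$ is strictly negative, so $U^{\textrm{CGT}}_{n}(\cdot,\bar{x}_{-n},\rho_{\textrm{base}})$ is strictly concave in $x_n$. Setting the first-order stationarity condition
\[
\frac{\partial U^{\textrm{CGT}}_{n}}{\partial x_n} \;=\; -2\alpha\, x_n - (\theta+\alpha\bar{x}_{-n}) \;=\; 0
\]
then yields the unique unconstrained maximizer $\hat{x}_n = -\frac{\theta}{2\alpha}-\frac{\bar{x}_{-n}}{2}$, which gives the first branch of (\ref{eq:best-response}).

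Finally, I would invoke the standard fact that maximizing a strictly concave function on a closed interval produces a unique maximizer equal to the projection of the unconstrained maximizer onto the interval. When $\hat{x}_n\in[x_{n,\min},x_{n,\max}]$ it is itself the best response; when $\hat{x}_n<x_{n,\min}$, strict concavity implies that $U^{\textrm{CGT}}_{n}(\cdot,\bar{x}_{-n},\rho_{\textrm{base}})$ is strictly decreasing on $[x_{n,\min},x_{n,\max}]$, so the best response is $x_{n,\min}$; symmetrically, when $\hat{x}_n>x_{n,\max}$ the best response is $x_{n,\max}$. These three cases exactly reproduce (\ref{eq:best-response}). The argument is essentially routine; the only point requiring any care is the justification of the projection step, which follows either from a one-line KKT argument or by direct monotonicity analysis of the one-dimensional quadratic on each side of $\hat{x}_n$.
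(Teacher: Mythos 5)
Your proposal is correct and follows essentially the same route as the paper's proof: strict concavity from $\partial^2 U^{\textrm{CGT}}_n/\partial x_n^2=-2\alpha<0$, the first-order condition yielding the interior candidate $-\frac{\theta}{2\alpha}-\frac{\bar{x}_{-n}}{2}$, and projection onto $[x_{n,\min},x_{n,\max}]$ when that candidate is infeasible. Your explicit monotonicity justification of the boundary cases is slightly more careful than the paper's, but it is the same argument.
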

\begin{proof}
See Appendix \ref{apx:thm_best-response}.
\end{proof}

In fact, one can rewrite the  best responses of all the players in Theorem \ref{thm:best-response} in a combined single matrix form. We define $\boldsymbol{A}$ to be an $n\times n $ matrix with all entries equal to -$\frac{1}{2}$ except the diagonal entries which are $0$, i.e., $A_{ij}=-\frac{1}{2}$ if $j\neq i$, and $A_{ij}=0$, otherwise. Let $\boldsymbol{a}=-\frac{\theta}{2\alpha}\boldsymbol{1}$ where $\boldsymbol{1}$ is the vector of all $1$'s. Then, we can rewrite  \eqref{eq:best-response} for all players as
\begin{align}\label{eq:recursion}
\boldsymbol{x}^{r}=\Pi_{\Omega}\big[\boldsymbol{a}+\boldsymbol{A}\boldsymbol{x}\big],
\end{align}
where $\Pi_{\Omega}[\cdot]$ is the projection operator on the $n$ dimensional cube  $\Omega:=\prod\limits_{n \in \mathcal{N}}[x_{n,\min},x_{n,\max}]$ in $\mathbb{R}^n$. Our analysis will later use this closed-form representation of the best response dynamics.          

\subsection{Existence and uniqueness of the followers' NE under CGT}

One key question with regard to the prosumers' game is whether such a game admits a pure-strategy NE. This is important as it allows us to stabilize the demand market in an equilibrium where each prosumer is satisfied with its payoff, as shown next. 

\begin{theorem}\label{thm:existence-uniqueness}
The prosumers' game admits a unique pure-strategy NE.
\end{theorem}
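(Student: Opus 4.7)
My plan is to recognize the followers' game as an exact potential game whose potential is a strictly concave quadratic, so that both existence and uniqueness of the pure-strategy NE reduce to the uniqueness of the maximizer of a strictly concave function on a compact convex set.

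Specifically, I would define
\[
P(\boldsymbol{x}) \;=\; -\alpha\sum_{n\in\mathcal{N}} x_n^2 \;-\; \theta\sum_{n\in\mathcal{N}} x_n \;-\; \alpha\sum_{n<m} x_n x_m.
\]
Direct differentiation gives $\partial P/\partial x_n = -2\alpha x_n - \theta - \alpha\bar{x}_{-n}$, which coincides with $\partial U^{\textrm{CGT}}_n/\partial x_n$ computed from \eqref{eq:utility-modified}. Hence $P$ is an exact potential for the followers' game, and a profile $\boldsymbol{x}^\star\in\Omega$ is a pure-strategy NE if and only if it is a KKT point of the concave box-constrained program $\max_{\boldsymbol{x}\in\Omega} P(\boldsymbol{x})$ with $\Omega=\prod_{n\in\mathcal{N}}[x_{n,\min},x_{n,\max}]$. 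This equivalence is essentially the content of Theorem~\ref{thm:best-response}: the piecewise formula \eqref{eq:best-response} is exactly the per-coordinate KKT condition for this maximization.

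Next, I would verify strict concavity by computing $\nabla^2 P = -\alpha(I + \boldsymbol{1}\boldsymbol{1}^T)$. Since $I+\boldsymbol{1}\boldsymbol{1}^T$ has eigenvalues $n+1$ (simple) and $1$ (multiplicity $n-1$), and $\alpha>0$, the Hessian is strictly negative definite. Thus $P$ is strictly concave on $\mathbb{R}^n$; since $\Omega$ is nonempty, compact, and convex, $P$ attains a unique maximum $\boldsymbol{x}^\star\in\Omega$. Strict concavity also forces the KKT point of this concave program to be unique, so by the equivalence above $\boldsymbol{x}^\star$ is the unique pure-strategy NE of the followers' game.

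The main conceptual step, and arguably the only obstacle, is spotting that the coupling term $\alpha x_n\bar{x}_{-n}$ is symmetric across all unordered prosumer pairs and therefore admits the simple potential $-\alpha\sum_{n<m}x_nx_m$; everything else is routine. An equivalent route would be Rosen's diagonal strict concavity condition, where the symmetric part of the pseudo-gradient Jacobian is exactly $\alpha(I+\boldsymbol{1}\boldsymbol{1}^T)\succ 0$, but the potential-game argument is preferable here because it sidesteps any contraction or monotone-operator machinery and dovetails cleanly with the best-response characterization already established in Theorem~\ref{thm:best-response}.
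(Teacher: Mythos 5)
Your argument is correct, but it takes a genuinely different route from the paper. The paper proves the theorem via Rosen's results for concave $n$-person games: existence follows from \cite[Theorem 1]{rosen1965existence} because each $U^{\textrm{CGT}}_n$ in \eqref{eq:utility-modified} is concave in $x_n$ on the compact convex box $\Omega$, and uniqueness follows from \cite[Theorem 2]{rosen1965existence} by verifying diagonal strict concavity, i.e., by writing the pseudo-gradient as $g(\boldsymbol{x},\boldsymbol{r})=\boldsymbol{K}\boldsymbol{x}+\boldsymbol{c}$ with $\boldsymbol{K}=-\alpha(\boldsymbol{I}+\boldsymbol{J})$ negative definite. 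You instead exhibit an exact potential $P(\boldsymbol{x})=-\alpha\sum_n x_n^2-\theta\sum_n x_n-\alpha\sum_{n<m}x_nx_m$ and reduce the entire NE set to the argmax of a strictly concave quadratic over $\Omega$; the chain of equivalences (NE $\Leftrightarrow$ per-coordinate first-order conditions, by concavity of each $U_n$ in $x_n$ $\Leftrightarrow$ KKT point of $\max_{\Omega}P$ $\Leftrightarrow$ global maximizer, by concavity of $P$) is sound, and the Hessian computation $\nabla^2 P=-\alpha(\boldsymbol{I}+\boldsymbol{1}\boldsymbol{1}^T)\prec 0$ is correct. The two proofs are two faces of the same fact---for an exact potential game the pseudo-gradient is $\nabla P$, so Rosen's diagonal strict concavity with $\boldsymbol{r}=\boldsymbol{1}$ is precisely strict concavity of $P$---but they buy different things. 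Your route is more elementary and self-contained (no game-theoretic machinery beyond optimizing one strictly concave function on a box), and as a bonus the potential immediately certifies convergence of sequential best-response dynamics. The paper's route is what generalizes to the PT setting of Theorem \ref{thm:PT_Cases}, where the perturbed utilities are no longer guaranteed to admit a potential but the concave-game structure can still be checked case by case, and it is also the structure invoked for the relaxation algorithm's convergence analysis; this is presumably why the authors chose it.
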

\begin{proof}
See Appendix \ref{apx:thm_existence-uniqueness}.
\end{proof}

\subsection{Distributed learning of the followers NE}

\begin{algorithm}[t!]
\caption{The relaxation learning algorithm}\label{alg:relaxing}
\begin{algorithmic}[0]
\State Given that at time step $t=1,2,\ldots$ players have requested $(x_1(t),\ldots,x_n(t))$ units of energy, at the next time step player $n\in \mathcal{N}$ requests $x_n(t+1)$ energy units given by  
\begin{align}\nonumber
x_n(t+1)=(1-\frac{1}{\sqrt{t}})x_n(t)+\frac{1}{\sqrt{t}}x^{r}_n(t),
\end{align}
where $x^{r}_n(t)$ denotes the best response of player $n$, given the actions of all other players $\boldsymbol{x}_{-n}(t)$ at time step $t$.
\end{algorithmic}
\end{algorithm}Next, we propose a distributed learning algorithm which converges in a polynomial rate to the unique  pure-strategy NE of the prosumers' game as formally stated in Algorithm \ref{alg:relaxing}. At each stage of Algorithm \ref{alg:relaxing}, prosumer $n$ selects its next action as a convex combination of its current action and its best response at that stage. One of the main advantages of Algorithm \ref{alg:relaxing} is that it can be implemented in a completely distributed manner as each prosumer needs only to know its own actions and best response function, and does not require any information about others' actions. Moreover, the prosumers do not need to keep track of their actions history which is the case in many other learning algorithms. Note that Algorithm \ref{alg:relaxing} can be viewed as a special case of more general algorithms known as \textit{relaxation} algorithms \cite{krawczyk2000relaxation}.

The idea behind Algorithm \ref{alg:relaxing} is that each player initially puts more weight on its best response in order to explore faster other possible actions with better payoffs. As the time elapses, the prosumers' actions become closer to their optimal actions and, hence, the prosumers exploit their current actions by putting more weight on their own actions. While the exploration coefficient $\frac{1}{\sqrt{t}}$ can be replaced by other possible coefficients, we have chosen $\frac{1}{\sqrt{t}}$ to optimize the speed of convergence. Finally, note that the implementation of Algorithm \ref{alg:relaxing} is made possible by a bidirectional communication between the power company and the prosumers, provided by the smart meters. In fact, at each iteration, the prosumer would send the power company its current strategy, and would receive the updated energy price.

Next, we consider the following definition which will be handy in proving our main convergence result in this section: 
\begin{definition}\label{def:Nikaido}
Given an $n$ players game with utility functions $\{u_n(\cdot)\}_{n \in \mathcal{N}}$ and any two action profiles $\boldsymbol{x}$ and $\boldsymbol{y}$, the \emph{Nikaido-Isoda} function associated with this game is given by $\Psi(\boldsymbol{x},\boldsymbol{y}):= \sum_{n \in \mathcal{N}}[u_n(y_n,x_{-n})-u_n(x_n,x_{-n})]$.
\end{definition}

The Nikaido-Isoda function measures the social income due to selfish deviation of  individuals. This function admits several key properties. As an example we always have $\Psi(\boldsymbol{x},\boldsymbol{x})=0, \forall \boldsymbol{x}$. Moreover, given a fixed action profile $\boldsymbol{x}$, $\Psi(\boldsymbol{x},\boldsymbol{y})$ is maximized when $y_n,$ equals the best response of player $n$ with respect to $\boldsymbol{x}_{-n}$. In particular, for such a best response action profile $\boldsymbol{y}$, $\Psi(\boldsymbol{x},\boldsymbol{y})=0$ if and only if $\boldsymbol{x}$ is a pure strategy NE of the game. While the Nikaido-Isoda function has been used earlier to prove convergence of certain dynamics to their equilibrium points \cite{facchinei2007generalized,krawczyk2000relaxation}, however it usually fails to provide an explicit convergence rate. In the following theorem we leverage the Nikaido-Isoda function associated to the prosumers' game to measure the distance of outputs of Algorithm \ref{alg:relaxing} from the Nash equilibrium, and hence obtain an explicit bound on the convergence rate of this algorithm.
\begin{theorem}\label{thm:convergence-rate}
If every prosumer updates its energy request bid based on Algorithm \ref{alg:relaxing}, then their action profiles will jointly converge to an pure strategy NE. After $t$ steps the joint actions will be an $\epsilon$-NE where $\epsilon=O(t^{-\frac{1}{4}})$ (i.e., the convergence rate to an NE is $O(t^{-\frac{1}{4}})$).
\end{theorem}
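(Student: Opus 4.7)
The plan is to use the Nikaido--Isoda value $V(\boldsymbol{x}) := \Psi(\boldsymbol{x},\boldsymbol{x}^r(\boldsymbol{x}))$ (Definition~\ref{def:Nikaido}) as the merit function: since $V(\boldsymbol{x})\leq\epsilon$ certifies that $\boldsymbol{x}$ is an $\epsilon$-NE, it suffices to drive $V(\boldsymbol{x}(t))$ to zero at the right rate. Because each $U^{\rm CGT}_n$ in \eqref{eq:utility-modified} is $2\alpha$-strongly concave in its own coordinate and $\boldsymbol{x}^r(\boldsymbol{x})$ is the coordinate-wise maximizer, the variational inequality at the clipped optimum produces $U^{\rm CGT}_n(x^r_n,x_{-n})-U^{\rm CGT}_n(x_n,x_{-n})\geq \alpha(x_n-x^r_n)^2$, and summing gives $V(\boldsymbol{x})\geq \alpha\|\boldsymbol{x}^r(\boldsymbol{x})-\boldsymbol{x}\|^2$. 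Thus the goal reduces to showing $\|\boldsymbol{x}^r(t)-\boldsymbol{x}(t)\|^2=O(t^{-1/2})$ along the iterates of Algorithm~\ref{alg:relaxing}.

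The key tool I would exploit is that the prosumers' game is an \emph{exact potential game}: since the mixed partials $\partial^2 U^{\rm CGT}_n/\partial x_n \partial x_m=-\alpha$ are symmetric in $n,m$, the strictly concave potential
\[
\Phi(\boldsymbol{x})=-\tfrac{\alpha}{2}\|\boldsymbol{x}\|^2-\theta\,\boldsymbol{1}^\top\boldsymbol{x}-\tfrac{\alpha}{2}(\boldsymbol{1}^\top\boldsymbol{x})^2
\]
satisfies $\partial \Phi/\partial x_n = \partial U^{\rm CGT}_n/\partial x_n$, and has Hessian $-\alpha(I+\boldsymbol{1}\boldsymbol{1}^\top)$ so that $\nabla \Phi$ is $L$-Lipschitz with $L=\alpha(n+1)$. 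Applying $L$-smoothness to the relaxation increment $\boldsymbol{x}(t+1)-\boldsymbol{x}(t)=\alpha_t(\boldsymbol{x}^r(t)-\boldsymbol{x}(t))$ gives
\begin{align*}
\Phi(\boldsymbol{x}(t+1))\geq \Phi(\boldsymbol{x}(t)) &+ \alpha_t\langle \nabla\Phi(\boldsymbol{x}(t)),\boldsymbol{x}^r(t)-\boldsymbol{x}(t)\rangle\\
&-\tfrac{L}{2}\alpha_t^2\|\boldsymbol{x}^r(t)-\boldsymbol{x}(t)\|^2.
\end{align*}
A per-coordinate case split on whether $x^r_n(t)$ is interior or clipped to an endpoint in \eqref{eq:best-response} shows $\langle \nabla\Phi(\boldsymbol{x}(t)),\boldsymbol{x}^r(t)-\boldsymbol{x}(t)\rangle \geq 2\alpha\|\boldsymbol{x}^r(t)-\boldsymbol{x}(t)\|^2$, so for $t$ large enough that $\alpha_t=1/\sqrt{t}\leq 2\alpha/L$, the potential obeys the ascent inequality $\Phi(\boldsymbol{x}(t+1))-\Phi(\boldsymbol{x}(t))\geq \alpha_t V(\boldsymbol{x}(t))$.

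Finally, I would telescope the ascent from $s=1$ to $t-1$ and use that $\Phi$ is bounded on the compact product set $\Omega$ to obtain $\sum_{s=1}^{t-1}\alpha_s V(\boldsymbol{x}(s))=O(1)$. Because $\sum_{s=1}^{t-1}s^{-1/2}=\Theta(\sqrt{t})$, a pigeonhole argument produces some $s^*\leq t$ with $V(\boldsymbol{x}(s^*))=O(t^{-1/2})$, and the first-paragraph inequality then gives $\|\boldsymbol{x}^r(s^*)-\boldsymbol{x}(s^*)\|=O(t^{-1/4})$, so that $\boldsymbol{x}(s^*)$ is an $O(t^{-1/4})$-NE; the exponent $1/\sqrt{t}$ in the step size is precisely what balances the linear ascent term against the quadratic overshoot. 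The hard part will be verifying that the variational inequality $\langle \nabla\Phi(\boldsymbol{x}),\boldsymbol{x}^r(\boldsymbol{x})-\boldsymbol{x}\rangle\geq 2\alpha\|\boldsymbol{x}^r(\boldsymbol{x})-\boldsymbol{x}\|^2$ really does survive the projection onto $\Omega$ in every coordinate case --- I would dispatch this by checking the three branches of \eqref{eq:best-response} separately via the KKT conditions at a clipped best response.
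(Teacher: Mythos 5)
Your route is genuinely different from the paper's: you exploit that the prosumers' game is an exact potential game (the potential $\Phi$ you write down does satisfy $\partial\Phi/\partial x_n=\partial U_n^{\rm CGT}/\partial x_n=-2\alpha x_n-\theta-\alpha\bar{x}_{-n}$, a fact the paper never uses) and run a projected-gradient-ascent analysis on $\Phi$, whereas the paper works directly with the Nikaido--Isoda gap $\Psi(\boldsymbol{x}(t),\boldsymbol{x}^r(t))$, combines its Lipschitz bound, convexity in the first argument, and strong concavity in the second (Lemma \ref{lemm:Nikaido}) into the recursion $a_{t+1}\leq a_t-a_t^2+c\,t^{-3/2}$ for $a_t=\tfrac{\alpha}{2K^2\sqrt{t}}\Psi(\boldsymbol{x}(t),\boldsymbol{x}^r(t))$, and closes it by induction. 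Your variational-inequality step $\langle\nabla\Phi(\boldsymbol{x}),\boldsymbol{x}^r-\boldsymbol{x}\rangle\geq 2\alpha\|\boldsymbol{x}^r-\boldsymbol{x}\|^2$ is correct: coordinate-wise it reduces to $(x_n^{\rm unc}-x_n^r)(x_n^r-x_n)\geq 0$, which is exactly the projection inequality, so the case split over the branches of \eqref{eq:best-response} will go through.

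There are, however, two genuine gaps. First, the ascent inequality $\Phi(\boldsymbol{x}(t+1))-\Phi(\boldsymbol{x}(t))\geq\alpha_t V(\boldsymbol{x}(t))$ does not follow from what you derived: the smoothness bound gives $\Phi(\boldsymbol{x}(t+1))-\Phi(\boldsymbol{x}(t))\geq\alpha\alpha_t\|\boldsymbol{x}^r(t)-\boldsymbol{x}(t)\|^2$, and your first-paragraph inequality $V\geq\alpha\|\boldsymbol{x}^r-\boldsymbol{x}\|^2$ points in the \emph{wrong direction} for substituting $V$ in place of $\|\boldsymbol{x}^r-\boldsymbol{x}\|^2$ (near the boundary of $\Omega$, $V$ can be of order $\|\boldsymbol{x}^r-\boldsymbol{x}\|$, not $\|\boldsymbol{x}^r-\boldsymbol{x}\|^2$, because the unconstrained optimizer may sit far outside the box). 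The fix is to telescope $\sum_s\alpha_s\|\boldsymbol{x}^r(s)-\boldsymbol{x}(s)\|^2=O(1)$ directly and only at the end convert via the Lipschitz upper bound $V(\boldsymbol{x})\leq K\|\boldsymbol{x}^r-\boldsymbol{x}\|$ (valid on the compact $\Omega$; this is exactly the first part of Lemma \ref{lemm:Nikaido}), which recovers $V(\boldsymbol{x}(s^*))=O(t^{-1/4})$. Second, and more substantively, the pigeonhole argument only certifies that \emph{some} iterate $s^*\leq t$ is an $O(t^{-1/4})$-NE; Theorem \ref{thm:convergence-rate} asserts that the action profile \emph{at} time $t$ is an $\epsilon$-NE and that the sequence $\{\boldsymbol{x}(t)\}$ converges to the (unique) NE. A best-iterate guarantee does not deliver either claim without further argument (e.g., monotonicity of $\Phi$ along the iterates plus a separate argument that the residual $\|\boldsymbol{x}^r(t)-\boldsymbol{x}(t)\|$ cannot recover after becoming small). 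The paper's induction avoids this entirely by bounding $\Psi(\boldsymbol{x}(t),\boldsymbol{x}^r(t))=O(t^{-1/4})$ for \emph{every} $t$ beyond a fixed threshold. You would need to upgrade your telescoping argument to a per-iterate bound to match the stated theorem.
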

\begin{proof}
See Appendix \ref{apx:thm:convergence-rate}.
\end{proof}   

As it has been shown in Appendix \ref{apx:thm_existence-uniqueness}, the prosumers' game is a \emph{concave game} \cite{rosen1965existence}, which is known to admit a distributed learning algorithm for obtaining its NE points (see, e.g., \cite[Theorem 10]{rosen1965existence}). However, in general obtaining distributed learning algorithms with provably fast convergence rates to NE points in concave games is a challenging task. Therefore, one of the main advantages of Theorem \ref{thm:convergence-rate} is that it establishes a polynomial convergence rate for the relaxation Algorithm \ref{alg:relaxing} leveraging rich structure of the prosumers' utility functions.

\begin{remark}
In fact, one of the advantages of our formulation compared to similar models such as \cite{zugno2013bilevel} is its computational tractability as it admits polynomial time distributed algorithms for finding its equilibrium points, regardless of the number of players in the game (Theorem \ref{thm:convergence-rate}).  
\end{remark}

\subsection{Finding the Stackelberg Nash equilibrium under CGT}\label{sec:methods-SE}

While Algorithm \ref{alg:relaxing} achieves a unique pure strategy for the prosumers' game under CGT, our final goal is to obtain the Stackelberg  equilibrium of the entire game. For this purpose, we leverage Algorithm \ref{alg:relaxing} using one of the following methods to construct the SE of the entire market under CGT: 
\subsubsection{Method 1}
The Stackelberg equilibrium of the game can be found by solving the following non-linear optimization problem.
Let $\boldsymbol{x}^*(\rho_{\textrm{base}})$ be the unique NE obtained by the followers when the power company's action is $\rho_{\textrm{base}}$. Note that $\boldsymbol{x}^*(\rho_{\textrm{base}})$ is a well-defined continuous function of $\rho_{\textrm{base}}$. First, the power company solves the following optimization problem a priori to find its unique optimal action $\rho^*_{\textrm{base}}$ and announces it to the prosumers. The problem is defined as:   
\begin{align}\label{eq:non-linear}
&\max_{\rho_{\textrm{base}}} U_{\textrm{pc}}(\boldsymbol{x}^*(\rho_{\textrm{base}}),\rho_{\textrm{base}}) \cr 
&s.t. \ \ \ \boldsymbol{x}^*(\rho_{\textrm{base}})=\Pi_{\Omega}[\boldsymbol{a}+\boldsymbol{A}\boldsymbol{x}^*(\rho_{\textrm{base}})].
\end{align}

In fact, one can characterize the unique pure-strategy NE of the prosumers' game given in \eqref{eq:non-linear} in more detail. Since at equilibrium every player must play its best response in \eqref{eq:recursion}, therefore, an action profile $\boldsymbol{x}^*$ is an equilibrium if and only if we have $\boldsymbol{x}^*=\Pi_{\Omega}[\boldsymbol{a}+\boldsymbol{A}\boldsymbol{x}^*]$, which means $\boldsymbol{x}^*=\textrm{argmin} \|\boldsymbol{z}-(\boldsymbol{a}+\boldsymbol{A}\boldsymbol{x}^*) \| ^2,  \boldsymbol{z}\in \Omega$ . Since the former is a convex optimization problem, we can write its dual as
\begin{align}\nonumber
&\max \ \mathcal{D}(\boldsymbol{\mu},\boldsymbol{\nu}):=-\frac{1}{4}\|\boldsymbol{\mu}-\boldsymbol{\nu}\|^2\!+\!(\boldsymbol{\mu}-\boldsymbol{\nu})'(\boldsymbol{a}+\boldsymbol{A}\boldsymbol{x}^*)\!-\!\boldsymbol{\mu}'{\bf 1} \cr 
&\hspace{3cm} \boldsymbol{\mu},\boldsymbol{\nu}\ge 0,
\end{align}
where $\boldsymbol{\mu}=(\mu_1,\ldots,\mu_n)$ and $\boldsymbol{\nu}=(\nu_1,\ldots,\nu_n)$ are the dual variables corresponding to the constraints $z_n\leq x_{n,\max}$ and $z_n\ge x_{n,\min}$, respectively. We denote the optimal solution of the dual by $(\boldsymbol{\mu}^*,\boldsymbol{\nu}^*)$. Since, we already know that $\boldsymbol{x}^*$ is the optimal solution of the primal, due to the strong duality the values of the primal and dual must be the same, i.e., $\mathcal{D}(\boldsymbol{\mu}^*,\boldsymbol{\nu}^*)= \|(I-\boldsymbol{A})\boldsymbol{x}^*-\boldsymbol{a}\|^2$. Moreover, the KKT conditions must hold at the optimal solution \cite{bertsekas2003convex}, which, together with $\mathcal{D}(\boldsymbol{\mu}^*,\boldsymbol{\nu}^*)= \|(I-\boldsymbol{A})\boldsymbol{x}^*-\boldsymbol{a}\|^2$ provide the following system of $3n$ equations with $3n$ variables $(\boldsymbol{x}^*,\boldsymbol{\mu}^*,\boldsymbol{\nu}^*)$ which characterizes the equilibrium point $\boldsymbol{x}_n^*$ using dual variables:
\begin{align}\label{eq:dual-characterization} 
&\mathcal{D}(\boldsymbol{\mu}^*,\boldsymbol{\nu}^*)= \|(I-\boldsymbol{A})\boldsymbol{x}^*-\boldsymbol{a}\|^2, \cr \nonumber
& \mu^*_n(x^*_n- x_{\max,n})=0, \\ 
& \nu^*_n(x^*_n- x_{\min,n})=0, \ \ \forall n\in \mathcal{N}.
\end{align} 
Solving these equations can be used to derive the unique pure-strategy NE of the followers' game.

We next present a second method, which does not require the power company to solve the non-linear inequalities in \eqref{eq:dual-characterization}. In addition, the second method allows the players to reach the SE quickly and efficiently in $1/\epsilon^5$ steps and will be mainly used in our simulation results in Section V. 
\subsubsection{Method 2}
Given any small $\epsilon>0$ for which the power company and the prosumers want to find their $\epsilon$-SE with precision $\epsilon$ (i.e., no one can gain more than $\epsilon$ by deviating), the company partitions its action interval and sequentially announces prices
 $\rho_{\textrm{base}}=k \epsilon, k=1,\ldots, \lfloor \frac{1}{\epsilon} \rfloor$. For each 
such price $\rho_{\textrm{base}}$, prosumers obtain their $\epsilon$-equilibrium in no more than $\frac{1}{\epsilon^4}$ steps, and the company must  repeat this process at most $\frac{1}{\epsilon}$ steps and choose the action that maximized its utility. The running time in this case will be $\frac{1}{\epsilon^5}$ to find an $\epsilon$-SE.


Our analysis thus far assumed that all prosumers are fully rational and their behavior can thus be modeled using CGT. However, this assumption might not hold , given that prosumer are humans that can have different subjective valuations on their uncertain energy trading payoffs.  Next, we extend our result using PT \cite{nobel} to model the behavior of prosumers when faced with the unknown future price of energy and thus the actual value of the stored energy.  

\begin{figure*}[t]
  \begin{center}
\begin{equation}\label{eq:PT} 
{\scriptsize U^{\textrm{PT}}_{n}(x_n,\bar{x}_{-n},\rho_{\textrm{base}}) =\begin{cases}
  \dfrac{(c \rho_{\textrm{max}}+ d - R_n)^{\beta^{+}+1} - (c \rho_{\textrm{min}}+ d - R_n)^{\beta^{+} +1}}{c(\beta^{+} + 1)\rho_\textrm{d}},  &\textrm{ if }   R_n < \rho_{\textrm{min}} c + d,   \\
   \dfrac{(c \rho_{\textrm{max}}+ d - R_n)^{\beta^{+} + 1}}{c(\beta^{+}+1)\rho_\textrm{d}} -\lambda_n\dfrac{(-c \rho_{\textrm{min}}- d + R_n)^{\beta^{-} +1}}{c(\beta^{-}+1)\rho_\textrm{d}},  &\textrm{ if }  \rho_{\textrm{min}} c + d< R_n < \rho_{\textrm{max}} c + d,   \\
  \dfrac{\lambda_n(-c\rho_{\textrm{max}}- d + R_n)^{\beta^{-} + 1} -\lambda_n (-c \rho_{\textrm{min}}- d + R_n)^{\beta^{-} + 1}}{c(\beta^{-}+1)\rho_\textrm{d}},  &\textrm{ if }   \rho_{\textrm{max}} c + d < R_n.   \\
\end{cases}
\vspace{-0.4cm} }
\end{equation}
\end{center}\vspace{-0.4cm}
\end{figure*}

\section{Prospect  theoretic analysis}


In a classical game-theoretic framework, an individual evaluates an objective expected utility. However, in a real-life setting, empirical studies \cite{nobel}, have shown that decision makers, tend to deviate noticeably from the rationality axioms, when subjected to uncertain payoffs. The most prominent of such studies was that done by Kahneman and Tversky within the context of prospect theory \cite{nobel}, which won the 2002 Nobel prize in economic sciences.

The utility framing notion is one of the two main tenets of prospect theory. As observed in real-life experimental studies, utility framing states that each individual perceives a utility as either a loss or a gain, after comparing it to its individual reference point \cite{nobel}. The reference point is typically different for each individual and originates from its past experiences and future aspirations of profits. Furthermore, individuals tend to evaluate losses in a very different manner compared to gains. The main axioms of utility framing are summarized as follows:

\begin{itemize}
  \item Individuals perceive utility according to changes in value with respect to a reference point rather than an absolute value.
  \item Individuals assign a higher value to  differences between small gains or losses close to the reference point in comparison to those further away. Te effect is referred to as diminishing sensitivity, and is captured by the coefficients ${\beta^+}$ and ${\beta^-}$.
  \item Individuals feel greater aggravation for losing a sum of money than  satisfaction associated with gaining the same amount of money. This phenomenon is referred to as loss aversion and is captured by the aversion coefficient $\lambda$.
 \end{itemize}
 It is worth noting that PT differs from other risk measures such as \emph{Conditional Value at Risk} (CVaR) \cite{CVaR} which evaluates the market risk based on the \emph{expected} value of the risk at some future time. The underlying assumption in evaluating CVaR is that the risk is measured based on the conventional expectation of the future uncertain price, while in PT, the expectation is replace by subjective perception of the individuals which up to some extent introduces a notion of bounded rationality into the model.

 \subsection {Energy Trading Analysis through Utility Framing}
 
 In our model, a prosumer's uncertainty originates from the unknown future energy price and power company pricing strategy. Consequently, we will analyze the effect of the key notion of  \emph{utility framing} from PT. Utility framing states that a utility is considered a gain if it is larger than the reference point, while it is perceived as a loss if it is smaller than that reference point. This reference captures a prosumer's anticipated profits and  originates from past energy trading transactions and future aspirations of profits, which can differ in between different prosumers \cite{CPT}. Let $R_n$ be the reference point of a given prosumer $n$. Thus, to capture such subjective perceptions, we use PT framing \cite{CPT} to redefine the utility function:\vspace{-0.5cm} 
 
\begin{footnotesize}
\begin{align}\label{eq:rule}
V\left(U_n\left(\boldsymbol{x}, \rho_{\textrm{base}}\right) \right)\!=\!\begin{cases}
\left( U_n(\boldsymbol{x}, \rho_{\textrm{base}})\!-\!R_n\right)^{\beta^+} &\!\!\!\textrm{ if } U_n(\boldsymbol{x}, \rho_{\textrm{base}})\!>\! R_n,\nonumber \\
-\lambda_n  \left(R_n\!-\!U_n(\boldsymbol{x}, \rho_{\textrm{base}}\right)^{\beta^-} &\!\!\!\textrm { if } U_n(\boldsymbol{x}, \rho_{\textrm{base}})\!<\!R_n,
\end{cases}\\
\end{align}\end{footnotesize}where $\beta^-,\beta^+\in(0,1]$ and $\lambda \geq 1$. $V(\cdot)$ is a framing value function, concave in gains and convex in losses with a larger slope for losses than for gains \cite{CPT}. The expected utility function of prosumer $n$ under PT, for a given action profile $\boldsymbol{x}$, is given by (\ref{eq:PT}) where $c:= W_n + Q_n + x_n - L_n$, $d:= -\left( \rho_{\textrm{base}} + \alpha (x_n +   \bar{x}_{-n}) \right)x_n$, and $\rho_{\textrm{d}}:=\rho_{\textrm{max}} - \rho_{\textrm{min}} $.

\subsection{Existence and uniqueness of the NE under PT}

To study the existence of the followers' NE under PT, we analyze the concavity of the utility function in (\ref{eq:PT}). The concavity of the PT utility function provides a sufficient condition to conclude the existence of at least one pure-strategy NE \cite[Theorem 1]{rosen1965existence}. Here, we note that prosumer $n$'s expected utility function can take multiple forms over the product action space $\Omega$, depending on the conditions in (\ref{eq:PT}). It is thus challenging to prove that the utility function is concave, which makes it extremely difficult to analyze the existence and uniqueness of the followers' NE. Thus, we inspect a number of conditions under which the PT utility function is concave. Here, for simplicity and to provide more closed-form solutions, we disregard the diminishing sensitivity effect and thus set $\beta^{+}=\beta^{-}=1$. The following theorem provides  sufficient conditions under which the prosumers' game under PT admits a pure NE.

\begin{theorem}\label{thm:PT_Cases}
In either of the following cases, the prosumers' game under PT admits at least one pure strategy NE:
\begin{itemize}
\item { Case 1}: $\Delta_1 >0, \emph{and} \,\,  x_{r1} < x_{n,\emph{min}}, x_{n,\emph{max}} < x_{r2}$.
\item { Case 2}: $\Delta_2 < 0,  \emph{or} \,\,  x_{n,\emph{max}} < x_{r3}, \emph{or} \,\,\,\, x_{r4} < x_{n,\emph{min}}$.
\item { Case 3}: $(\Delta_2 >0, x_{r3} < x_{n,\emph{min}}, x_{n,\emph{max}} < x_{r4})$, \emph{and} $(\Delta_1 < 0,  \emph{or} \,\,  x_{n,\emph{max}} < x_{r1}$ \emph{or} $x_{r2} < x_{n,\emph{min}})$, \emph{and} $\left( \tiny{ x_{\emph{max},n} < 1-\frac {b_1}{a_1}} \right)$, 
\end{itemize}
where
{\small
\begin{align}\nonumber
&k_n:= W_n + Q_n - L_n,\cr 
&\Delta_1:= (\rho_{\emph{min}} - \rho_{\emph{base}} - \alpha\bar{x}_{-n})^2 + 4\alpha (k_n \rho_{\emph{min}} - R_n),\cr 
&\Delta_2:= (\rho_{\emph{max}} - \rho_{\emph{base}} - \alpha\bar{x}_{-n})^2 + 4\alpha (k_n \rho_{\emph{max}} - R_n),\cr 
&x_{r1,r2}:= \frac{\pm \sqrt{\Delta_1} +  (\rho_{\emph{min}} - \rho_{\emph{base}}  - \alpha\bar{x}_{-n}) }{2\alpha},\cr 
&x_{r3,r4}:= \frac{\pm\sqrt{\Delta_2} +  (\rho_{\emph{max}} - \rho_{\emph{base}}  - \alpha\bar{x}_{-n})}{2\alpha},\cr 
&m_1:= 64(W_n + Q_n - L_n), \ a_1= 48 \alpha^2 (1-\lambda_n),\cr 
&b:= (176\alpha^2k_n + 32\alpha(\rho_{\emph{base}} - \rho_{\emph{max}} + \alpha\bar{x}_{-n}))(1-\lambda_n).
\end{align} }   
\end{theorem}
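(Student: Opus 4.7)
The plan is to apply Rosen's concave-game existence theorem \cite{rosen1965existence}: since each action set $\mathcal{X}_n=[x_{n,\min},x_{n,\max}]$ is a nonempty, compact, convex subset of $\mathbb{R}$ and $U^{\textrm{PT}}_n$ is continuous, it suffices to prove in each of the three listed cases that $U^{\textrm{PT}}_n(x_n,\bar{x}_{-n},\rho_{\textrm{base}})$ is concave in $x_n$ for every fixed $\bar{x}_{-n}$. The key preliminary observation is that the branch-selection expressions $\rho_{\min}c+d-R_n$ and $\rho_{\max}c+d-R_n$ appearing in the case split of \eqref{eq:PT} are both quadratics in $x_n$ with leading coefficient $-\alpha<0$; their discriminants are exactly $\Delta_1$ and $\Delta_2$, and their roots are $x_{r1},x_{r2}$ and $x_{r3},x_{r4}$, respectively. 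Consequently the hypotheses of the three cases are geometric statements that confine $[x_{n,\min},x_{n,\max}]$ to a single branch of \eqref{eq:PT}.

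For Case 1, the conditions place the feasible interval strictly between the roots of the first quadratic, so that $\rho_{\min}c+d>R_n$ throughout and only the pure-gain branch of \eqref{eq:PT} is active. Substituting $\beta^{+}=1$ and using $A^2-B^2=(A-B)(A+B)$ with $A=c\rho_{\max}+d-R_n$ and $B=c\rho_{\min}+d-R_n$, the common factor $c\rho_{\textrm{d}}$ cancels the denominator and the utility collapses to $-\alpha x_n^2-(\theta+\alpha\bar{x}_{-n})x_n+\delta_n-R_n$, a downward parabola, hence concave. Case 2 is symmetric: each of its three alternatives forces $R_n>\rho_{\max}c+d$ on the whole interval, activating only the pure-loss branch, and an analogous simplification with $\beta^{-}=1$ yields $U^{\textrm{PT}}_n=\lambda_n\bigl(U^{\textrm{CGT}}_n-R_n\bigr)$, a positive scalar multiple of the concave CGT utility in \eqref{eq:utility-modified}.

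Case 3 is the substantive one. Its hypotheses trap the feasible interval inside the middle branch, where both gain and loss contributions are active, so with $\beta^{+}=\beta^{-}=1$ the utility becomes
\[
U^{\textrm{PT}}_n=\frac{(c\rho_{\max}+d-R_n)^{2}-\lambda_n(R_n-c\rho_{\min}-d)^{2}}{2c\rho_{\textrm{d}}},
\]
with $c=k_n+x_n$ and $d=-(\rho_{\textrm{base}}+\alpha(x_n+\bar{x}_{-n}))x_n$ polynomial in $x_n$. The plan is to differentiate twice in $x_n$ directly: after clearing the denominator $2c\rho_{\textrm{d}}$ and expanding the squares, the numerator is a quartic whose coefficients mix the two contributions through the prefactor $(1-\lambda_n)$. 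Tracking these coefficients one expects $\partial^{2}U^{\textrm{PT}}_n/\partial x_n^{2}$ to reduce, up to positive multiplicative factors, to an affine expression of the form $a_1(x_n-1)+b_1$ (with $m_1$ appearing as an auxiliary normalization constant); since $\lambda_n\ge 1$ makes $a_1<0$, this expression is nonpositive precisely on $\{x_n\le 1-b_1/a_1\}$, which is exactly the extra hypothesis $x_{n,\max}<1-b_1/a_1$ imposed in the theorem.

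The main obstacle is clearly Case 3, where the utility is no longer polynomial in $x_n$ (the factor $1/c$ survives after expansion) and establishing concavity requires patient coefficient bookkeeping to surface the explicit threshold in $a_1$ and $b_1$; Cases 1 and 2 are essentially algebraic identities exploiting the fact that whenever a single PT branch is active and $\beta^{\pm}=1$, the PT utility is a positive affine transformation of the already-concave CGT utility.
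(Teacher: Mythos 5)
Your proposal follows essentially the same route as the paper's proof: Rosen's concave-game existence theorem, with each case's hypotheses read as confining $[x_{n,\min},x_{n,\max}]$ to a single branch of \eqref{eq:PT} via the two concave quadratics whose discriminants are $\Delta_1,\Delta_2$ and whose roots are $x_{r1},x_{r2},x_{r3},x_{r4}$, so that Cases 1 and 2 collapse to positive affine transformations of the concave CGT utility and Case 3 is settled by a direct second-derivative computation whose affine-in-$x_n$ part yields the threshold $1-b_1/a_1$. One caveat on Case 3: your sign direction does not follow from your own premises --- with $a_1<0$ (and $m_1>0$) the affine part $\frac{a_1}{m_1}\bigl(x_n-1+\frac{b_1}{a_1}\bigr)$ is nonpositive for $x_n \ge 1-\frac{b_1}{a_1}$, not $x_n\le 1-\frac{b_1}{a_1}$, which is in fact the direction the paper's own appendix derives and which conflicts with the hypothesis $x_{n,\max}<1-\frac{b_1}{a_1}$ as stated in the theorem; since this inconsistency already exists between the paper's theorem statement and its proof, your writeup inherits rather than introduces it, but you should resolve the direction explicitly (and note that both your argument and the paper's implicitly require $c=k_n+x_n>0$ for the term $-(\lambda-1)a_2^2/c^3$ to be negative).
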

\begin{proof}
See Appendix \ref{apx:thm:PT_Cases}.
\end{proof}

As an immediate corollary of Theorem \ref{thm:PT_Cases}, under any of the above conditions, one can again obtain the SE of the entire market for PT prosumers using the same procedure used under CGT (i.e., using Algorithm \ref{alg:relaxing} in the prosumers' side together with either of the methods in Subsection \ref{sec:methods-SE}). This is simply because, under any of the conditions in Theorem \ref{thm:PT_Cases}, the prosumers' game again becomes a concave game which is sufficient for the convergence of Algorithm \ref{alg:relaxing}. It is worth noting that, in general, using PT rather than CGT will change the results pertaining to the existence of an NE (see e.g.,
\cite{metzger2010equilibria}). However, Theorem \ref{thm:PT_Cases} provides a sufficient condition under which the same existence results derived for CGT still hold under PT.

Finally, whenever the concavity of the game cannot be guaranteed, we propose a sequential best response algorithm, that build on our previous work in \cite{rahi}. This is a special case of Algorithm $1$, where $x_n(t+1) = x^{r}_n(t)$, and where players update their strategy sequentially instead of simultaneously. An analytical proof of existence/convergence is challenging, given that no proof for the game's concavity could be derived, as previously discussed. However, when it converges, this algorithm is guaranteed to reach an NE. In fact, as observed from our simulations in Section V, the algorithm always converged and found a pure-strategy NE, for all simulated scenarios.

\section{Simulation Results and analysis}\label{sec:simulations}

For our simulations,  we consider a smart grid with $N=9$ prosumers, unless otherwise stated, each of which having a load $L_n$ arbitrarily chosen within the range $[10, 30]$ kWh. In addition, the storage capacity $Q_{\textrm{max},n}$ is set to 25 kWh and $\alpha = 1/N$. $\beta^+$ and $\beta^-$ are taken to be both equal to 0.88 and $\lambda=2.25$, unless stated otherwise  \cite{CPT}. We set, $\rho_{\textrm{base}} = \$ 0.04$ and $R_n = \$ 1$, unless stated otherwise. When the leader's action is not fixed, method $2$ from Section III-C was used to find the SE.   
 \begin{figure}[t]\label{fig:2}
  \begin{center}
 \includegraphics[width=0.9\linewidth,height=.16\textheight]{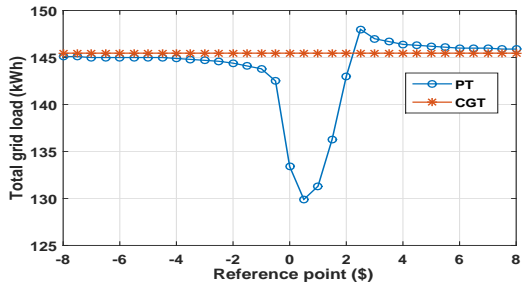}
  \caption{\label{fig:2}Total grid load under expected utility theory and prospect theory.}
  \end{center}\vspace{-0.3cm}
\end{figure}

\indent Fig. \ref{fig:2} compares the effect of different prosumer reference points on the total  energy sold or bought for both CGT and PT, while fixing the power company's action. For CGT, a prosumer's reference point is naturally irrelevant. For the PT case, for a reference point below $-\$2$, the prosumers' action profile is not significantly affected compared to CGT, since most potential payoffs of the action profile are still viewed as gains, above the reference point. As the reference point increases from $-\$2$ to $\$0.5$, the total energy consumed will decrease from around $145$ kWh to $130$ kWh, since some of the potential payoffs of the current action profile will start to be perceived as losses, as they cross the reference point. Given that losses have a larger weight under PT compared to CGT, the expected utility of the current strategy profile will significantly decrease thus causing the followers to exhibit a risk-averse behavior. In fact, as  some of the potential future profits are perceived as losses, a prosumer will sell more energy at the current time slot. As the reference point increases from $\$0.5$ to $\$2$, the present profits are now perceived as losses, and prosumers will start exhibiting risk-seeking behavior. In fact, each prosumer will consider the present profit as insignificant and will thus store more energy in the hope of selling it in the future at higher prices. Finally, as the reference point approaches \$8, the effect of uncertainty will gradually decrease, given that all profits are now perceived as losses.  
We note that even a small difference in perception ($\$1.5$) caused the total grid load to shift from $145$ kWh to $130$ kWh. This highlights the importance of behavioral analysis and prosumer subjectivity when assessing the performance of dynamic pricing strategies.
\begin{figure}[t]\label{fig:3}
  \begin{center}\vspace{-0.05cm}
\includegraphics[width=0.95\linewidth,height=.16\textheight]{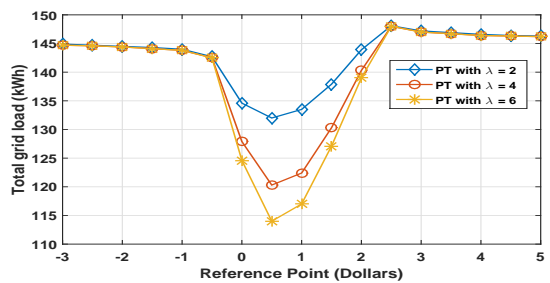}
  \caption{Effect of varying the loss multiplier $\lambda$.}\label{fig:3}
  \end{center}\vspace{-0.35cm}
\end{figure}

\indent Fig. \ref{fig:3} shows the effect of the loss multiplier $\lambda$ on the total energy purchased, for a fixed power company strategy. The loss multiplier maps the loss aversion of prosumers when assessing their utility outcomes. The effect of framing is more prominent as the loss multiplier increases. For instance, the prosumers will exhibit more risk averse behavior for a reference point in the range of $[-\$0.5, \$2.5]$. As seen from Fig. \ref{fig:3}, as $\lambda$ increase from $2$ to $6$, the total load would decrease by up to $14 \%$. In fact, to avoid the large losses, the prosumers will decrease the energy they purchase at the current risk free energy price. 
\begin{figure}[t]\label{fig:4}
  \begin{center}
   \hspace{-0.5cm}\includegraphics[width=0.89\linewidth,height=0.21\textheight]{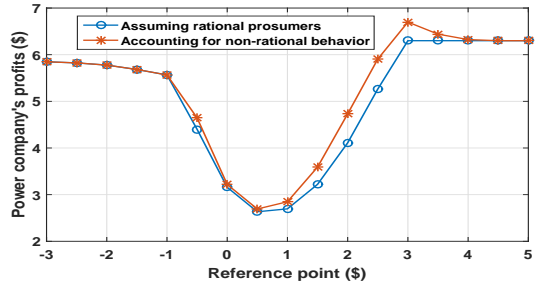}
  \vspace{-1cm}\caption{\label{fig:4}Power company's profits}
  \end{center}\vspace{-0.4cm}
\end{figure}

\begin{figure}[t]\label{fig:5}
  \begin{center} \includegraphics[width=0.95\linewidth,height=0.17\textheight]{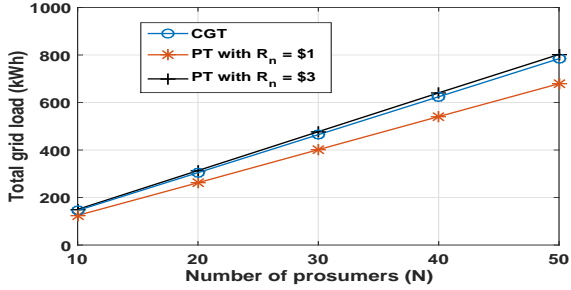}
  \caption{\label{fig:5}Total grid load for different number of prosumers}
  \end{center}\vspace{-0.3cm}
\end{figure}

\begin{figure}[t]\label{fig:6}
  \begin{center}\vspace{0.2cm}
\hspace{-0.2cm}\includegraphics[width=0.97\linewidth,height=.17\textheight]{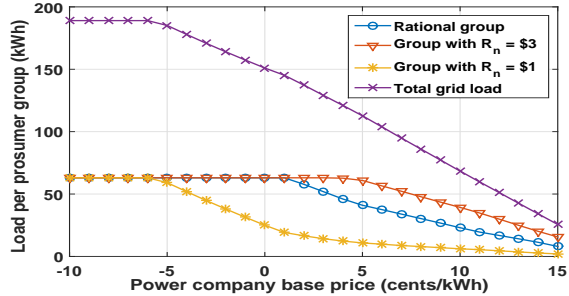}
  \caption{\label{fig:6}Load of different groups in a single grid.}
  \end{center}\vspace{-0.3cm}
\end{figure}

\indent Fig. \ref{fig:4} compares the company's profits for the scenario in which the power company accounts for prosumer irrationality to the scenario in which the power company assumes that prosumers are rational. In both scenarios, the prosumers are irrational. For a reference point below $\$1$, the  company's profits are barely affected. However, as the reference point crosses $\$1$, the company's profits start to show a clear decrease between the two scenarios. In fact, as the power company is not accounting for the prosumer's actual subjective behavior, its pricing strategy is no longer optimal. As was seen in Fig. \ref{fig:2}, this is the reference point range where the total consumption mostly differs between CGT and PT.
 The decrease in profits reaches a peak value of 15 \% at a reference point of \$2. Clearly, the power company will  experience a decrease in profits, if it neglects the subjective perception of prosumers. 

\indent Fig. \ref{fig:5} shows the total grid load energy consumption as function of the number of prosumers. The figure highlights the difference in consumption between rational prosumers and subjective prosumers with $R_n = \$1$, which increases significantly with the number of prosumers in the grid. This difference reaches $100$ kWh for $50$ prosumers. This highlights the impact of irrational behavior, which is prominent for larger grids. 

\indent Fig. \ref{fig:6} shows the energy consumption of different groups of prosumers, with different reference points, inside a single grid.
For a very small $\rho_{\textrm{base}}$, the different groups have equal consumption. As  $\rho_{\textrm{base}}$ is increased to $-5$ cents, the prosumers with  $R_n = \$1$ start to decrease their consumption at equilibrium, while the other groups' consumption remains unchanged. This is similar to what was discussed in Fig \ref{fig:2}, where prosumers with reference points close to $\$1$, exhibit risk averse behavior and thus lower energy consumption. On the other hand, rational prosumers will start decreasing their consumption at $\rho_{\textrm{base}} = 2$ cents, while risk seeking prosumers ($R_n = \$ 3$) will start decreasing their consumption at $\rho_{\textrm{base}} = 5$ cents. 

\begin{figure}[t]\label{fig:1}
\vspace{0.2cm}
  \begin{center}
\includegraphics[width=0.86\linewidth,height=.16\textheight]{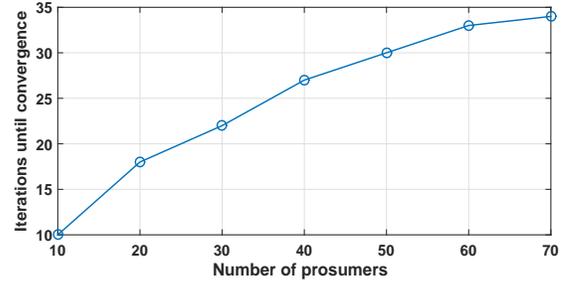}
  \caption{\label{fig:1}Number of iterations for convergence}
  \end{center}\vspace{-0.3cm}
\end{figure}

\indent Fig. \ref{fig:1} shows the number of iterations needed for the best response algorithm to converge to a followers' NE for different number of prosumers, under PT. Clearly, the best response algorithm converges, for all these cases. In addition, the number of iterations needed for convergence is reasonable, even as the number of prosumers significantly increases from $10$ to $70$.

\section{Conclusion}\label{sec:conclusion}

In this paper, we have proposed a novel framework for analyzing energy trading of prosumers with the power grid, while accounting for the uncertainty of the future price of energy. We have formulated the problem as a Stackelberg game  between the power company (leader), seeking to maximize its profits by setting its optimal pricing strategy, and multiple prosumer (followers), attempting to choose the optimal amount of energy to trade. The prosumers game was shown to have a unique pure strategy Nash equilibrium under classical game theoretic analysis. Subsequently, we have used the novel concept of utility framing from prospect theory to model the subjective behavior of prosumers when faced with the uncertainty of future energy prices. Simulation results have highlighted the impact of behavioral considerations on the overall energy trading process.

As a future avenue of research, one can extend our model to a more dynamic multi-stage game that not only utilizes further capabilities of the storage devices (e.g. load shifting over time periods), but also admits efficient algorithms for obtaining its equilibrium points. In particular, devising incentive compatible mechanisms for our model in the form of multi-stage dynamic game is an interesting future problem.

\bibliographystyle{IEEEtran}
\bibliography{thesisrefs}

\appendices

\section{Proof of Theorem \ref{thm:best-response}}\label{apx:thm_best-response}

First, we analyze the strictly concave expected utility of prosumer $n$ in (\ref{eq:utility-modified}). By taking the second derivative of (\ref{eq:utility-modified}) with respect to  $x_n$, we get: $\frac {\partial U^{\rm EUT}_{n}} {\partial^2 x_n} = - 2\alpha$, which is a strictly negative term, as $\alpha > 0$. The optimal solution is either an interior point
obtained by solving the necessary and sufficient optimality condition given by
$ \frac {\partial U^{\rm EUT}_{n}} {\partial x_{n}} = 0$, or is at one of the boundaries, in case the interior solution is not feasible. Solving the optimality solution gives a unique solution $\nonumber  x_n ^{r}=  -\frac{\theta}{2\alpha}-\frac{\bar{x}_{-n}}{2}$. $x_n ^{r}$ maximizes each prosumer's expected utility function given that it lies in the feasible range of $\mathcal{X}_n$.  

\section{Proof of Theorem \ref{thm:existence-uniqueness}}\label{apx:thm_existence-uniqueness}

First, we show that the followers' game is a \emph{concave game} with closed and convex action sets in which the utility of player $n$ is a concave function of its own action $x_n$, for any fixed actions of others $\boldsymbol{x}_{-n}$. From \eqref{eq:utility-modified}, one can see that the utility function of each prosumer $n$ is quadratic, and thus concave, in terms of its own action variable $x_n$. Moreover, the action set of each prosumer $\mathcal{X}_n$ is clearly a closed convex set. Using \cite[Theorem 1]{rosen1965existence}, we can show that the prosumers' game admits at least one pure strategy NE. For NE uniqueness, we use \cite[Theorem 2]{rosen1965existence} to show that the prosumers game is diagonally strictly concave. This means that one can find a fixed nonnegative vector $\boldsymbol{r}\ge 0$ such that for every two action profiles $\boldsymbol{x}^o, \tilde{\boldsymbol{x}}\in \mathcal{X}_1\!\times\!\cdots\!\times\!\mathcal{X}_n$, $(\tilde{\boldsymbol{x}}-\boldsymbol{x}^o)'g(\boldsymbol{x}^o,\boldsymbol{r})+(\boldsymbol{x}^o-\tilde{\boldsymbol{x}})'g(\tilde{\boldsymbol{x}},\boldsymbol{r})>0$, where $g(\boldsymbol{x},\boldsymbol{r})=(r_1\nabla_{x_1}U^{\rm EUT}_1(\boldsymbol{x}),\ldots,r_n\nabla_{x_n}U^{\rm EUT}_n(\boldsymbol{x}))'$. We let $r_j=1$ for each $j\in \mathcal{N}$. Using  \eqref{eq:utility-modified}, we have
	\begin{align}\nonumber
		g_j(\boldsymbol{x},\boldsymbol{r})=-2\alpha x_j+\theta+\alpha \bar{x}_{-j}, \ \ j\in\mathcal{N}.
	\end{align}
We let $\boldsymbol{I}$ be the identity matrix, and $\boldsymbol{J}$ be a square matrix with all entries equal to 1. Then we can write $g(\boldsymbol{x},\boldsymbol{r})=\boldsymbol{K}\boldsymbol{x}+\boldsymbol{c}$, 
where $\boldsymbol{K}:=-\alpha (\boldsymbol{I}+\boldsymbol{J})$. $\boldsymbol{K}$ is a negative definite matrix due to the positive definiteness of $\boldsymbol{I}+\boldsymbol{J}$ and the fact that $-\alpha<0$. By checking the diagonally strict concavity condition we get

{\small\begin{align}\label{eq:diagonally-strict-concave}
		&(\tilde{\boldsymbol{x}}-\boldsymbol{x}^o)'g(\tilde{\boldsymbol{x}},\boldsymbol{r})+(\boldsymbol{x}^o-\tilde{\boldsymbol{x}})'g(\boldsymbol{x}^o,\boldsymbol{r})\cr 
		&\qquad=(\tilde{\boldsymbol{x}}-\boldsymbol{x}^o)'[\boldsymbol{K}\boldsymbol{x}^o+\boldsymbol{c}]+(\boldsymbol{x}^o-\tilde{\boldsymbol{x}})'[\boldsymbol{K}\tilde{\boldsymbol{x}}+\boldsymbol{c}]\cr 
		&\qquad=-(\tilde{\boldsymbol{x}}-\boldsymbol{x}^o)'\boldsymbol{K}(\tilde{\boldsymbol{x}}-\boldsymbol{x}^o)>0,
	\end{align}} 
where the last inequality is due to the negative-definiteness of the matrix $\boldsymbol{K}$. Using \cite[Theorem 2]{rosen1965existence} the NE will be unique.  

\smallskip
\section{Auxililiary lemma for the proof of Theorem \ref{thm:convergence-rate}}\label{apx:lemm:Nikaido}
{\linespread{1.1}
\begin{lemma}\label{lemm:Nikaido}
\normalfont There exists a constant $K>0$ for which the Nikaido-Isoda function $\Psi(\boldsymbol{x},\boldsymbol{y})$ associated with the prosumers' game satisfies $\Psi(\boldsymbol{x},\boldsymbol{y})\leq K\|\boldsymbol{x}-\boldsymbol{y}\|$. Moreover, $\Psi(\boldsymbol{x},\boldsymbol{y})$ is convex in $\boldsymbol{x}$ and strongly concave in $\boldsymbol{y}$ such that 
{\small\begin{align}\label{eq:strong-concave} 
\Psi(\boldsymbol{x},\lambda \tilde{\boldsymbol{y}}+(1-\lambda)\hat{\boldsymbol{y}})&= \lambda\Psi(\boldsymbol{x}, \tilde{\boldsymbol{y}})+(1-\lambda)\Psi(\boldsymbol{x}, \hat{\boldsymbol{y}})\cr 
&+\alpha\lambda(1-\lambda)\|\hat{\boldsymbol{y}}-\tilde{\boldsymbol{y}}\|^2, \ \forall \lambda\in [0,1].\cr
\end{align}}\vspace{-0.2cm}
\end{lemma}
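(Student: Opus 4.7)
The plan is to exploit the very simple quadratic structure of each follower utility in \eqref{eq:utility-modified} and verify the three assertions in sequence by direct computation. Substituting \eqref{eq:utility-modified} into the definition of $\Psi$, the dependence on $\boldsymbol{y}$ is fully separable, each $y_n$ appearing only inside the one-dimensional quadratic $u_n(y_n,\boldsymbol{x}_{-n}) = -\alpha y_n^2 - (\theta + \alpha\bar{x}_{-n})y_n + \delta_n$, while the dependence on $\boldsymbol{x}$ enters only through the linear quantities $\bar{x}_{-n}$ together with the single aggregate term $\sum_n u_n(x_n,\boldsymbol{x}_{-n})$. This reduces each of the three claims to a one-variable calculation.

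For the Lipschitz-type bound $\Psi(\boldsymbol{x},\boldsymbol{y}) \leq K\|\boldsymbol{x}-\boldsymbol{y}\|$, I would first observe that $\Psi(\boldsymbol{x},\boldsymbol{x}) = 0$ and that $\Psi$ is a smooth polynomial on the compact box $\Omega\times\Omega$. Applying the mean value inequality to $\boldsymbol{y}\mapsto\Psi(\boldsymbol{x},\boldsymbol{y})$ along the segment from $\boldsymbol{x}$ to $\boldsymbol{y}$, combined with the obvious boundedness of $\partial_{y_n} u_n = -2\alpha y_n - (\theta + \alpha\bar{x}_{-n})$ on $\Omega$, then yields the desired inequality with a constant $K$ proportional to $\sqrt{N}$ times the maximum magnitude of these partial derivatives.

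For convexity in $\boldsymbol{x}$, my plan is to invoke the identity $\sum_n x_n^2 + 2\sum_{k<n} x_k x_n = \bigl(\sum_n x_n\bigr)^2$ to rewrite $-\sum_n u_n(x_n,\boldsymbol{x}_{-n})$ as $\alpha\bigl(\sum_n x_n\bigr)^2 + \theta\sum_n x_n + \textrm{const}$, a convex quadratic in $\boldsymbol{x}$. The remaining piece $\sum_n u_n(y_n,\boldsymbol{x}_{-n})$ is affine in $\boldsymbol{x}$ for fixed $\boldsymbol{y}$, since its $\boldsymbol{x}$-dependence enters only linearly through $\bar{x}_{-n}$; the sum $\Psi$ is therefore convex in $\boldsymbol{x}$.

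For the strong concavity identity \eqref{eq:strong-concave}, separability reduces the check to a single one-dimensional quadratic of the form $y\mapsto -\alpha y^2 + (\textrm{linear})$, for which a direct expansion produces precisely $\alpha\lambda(1-\lambda)(\tilde{y}-\hat{y})^2$; summing across coordinates yields the claimed correction $\alpha\lambda(1-\lambda)\|\tilde{\boldsymbol{y}}-\hat{\boldsymbol{y}}\|^2$, and the $\boldsymbol{x}$-dependent term $-\sum_n u_n(x_n,\boldsymbol{x}_{-n})$ drops out of the three-point identity because it is constant in $\boldsymbol{y}$. I expect the main obstacle to be purely one of bookkeeping, namely carefully collecting the cross-terms in the rewriting of $-\sum_n u_n(x_n,\boldsymbol{x}_{-n})$ used for convexity in $\boldsymbol{x}$; once that algebraic identity is in hand, each of the three assertions follows by a routine calculation.
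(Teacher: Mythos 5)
Your proposal is correct and follows essentially the same route as the paper's proof: both exploit the explicit quadratic form of $\Psi$ obtained by substituting \eqref{eq:utility-modified}, bound $\Psi(\boldsymbol{x},\boldsymbol{y})$ by a constant times $\|\boldsymbol{x}-\boldsymbol{y}\|$ using compactness of $\Omega$ (the paper factors out $(x_n-y_n)$ and applies Cauchy--Schwarz where you invoke the mean value inequality, an equivalent step), establish convexity in $\boldsymbol{x}$ from the identity $\sum_n \bar{x}_{-n}x_n = \bigl(\sum_n x_n\bigr)^2 - \sum_n x_n^2$ (which is exactly the Hessian computation $\nabla^2_{\boldsymbol{x}\boldsymbol{x}}\Psi = 2\alpha\boldsymbol{J}$ in the paper), and verify \eqref{eq:strong-concave} coordinatewise from the three-point identity for the quadratic $-\alpha y_n^2$. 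No gaps.
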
}
\vspace{-0.7cm}

{\linespread{0.95}
\begin{proof}
For any two action profiles of the prosumers $\boldsymbol{x}=(x_1,\ldots,x_n)\in\Omega$ and $\boldsymbol{y}=(y_1,\ldots,y_n)\in\Omega$, the Nikaido-Isoda function adopted for the utility  in \eqref{eq:utility-modified} will be:

\begin{small}
\begin{align}\label{eq:nikaido}
\Psi(\boldsymbol{x},\boldsymbol{y}):&= \sum\limits_{n \in \mathcal{N}}\ [U^{\rm EUT}_n(y_n,\boldsymbol{x}_{-n},\rho_{base})-U^{\rm EUT}_n(x_n,\boldsymbol{x}_{-n},\rho_{\textrm{base}})]\cr
&=\sum\limits_{n \in \mathcal{N}} [\alpha (x_n^2-y^2_n)+(\theta+\alpha \bar{x}_{-n})(x_n-y_n)].
\end{align}\end{small}
\vspace{-0.3cm}

\noindent Using \eqref{eq:nikaido}, for any two action profiles $\boldsymbol{x},\boldsymbol{y}\in \Omega$, we have 
{\small
\begin{align}\nonumber
\Psi(\boldsymbol{x},\boldsymbol{y})&=\sum\limits_{n \in \mathcal{N}}(x_n-y_n)[\alpha (x_n+y_n)+\theta+\alpha \bar{x}_{-n}]\cr 
&\leq \sqrt{    \sum\limits_{n \in \mathcal{N}}(x_n\!-\!y_n)^2}\sqrt{\sum\limits_{n \in \mathcal{N}}[\alpha (x_n\!+\!y_n)\!+\!\theta\!+\!\alpha \bar{x}_{-n}]^2}\cr 
&=\|\boldsymbol{x}-\boldsymbol{y}\|\sqrt{\sum\limits_{n \in \mathcal{N}}[\alpha (x_n+y_n)+\theta+\alpha \bar{x}_{-n}]^2}\cr 
&\leq K \|\boldsymbol{x}-\boldsymbol{y}\|,
\end{align} \vspace{-0.35cm}}

\noindent where the first inequality is due to the Cauchy-Schwarz inequality, and $K:=\sqrt{n(\theta+\alpha(n+1)B_{\max})^2}$ is an upper bound constant for the second term of the last equality. To show the convexity of $\Psi(\boldsymbol{x},\boldsymbol{y})$ with respect to $x$, let $\boldsymbol{J}$ be the $n\times n$ matrix with all entries equal to 1. Using \eqref{eq:nikaido}, a simple calculation shows that $\nabla^2_{\boldsymbol{x}\boldsymbol{x}}\Psi(\boldsymbol{x},\boldsymbol{y})=2\alpha \boldsymbol{J}$, where $\nabla^2_{\boldsymbol{x}\boldsymbol{x}}\Psi(\boldsymbol{x},\boldsymbol{y})$ denotes the Hessian matrix of $\Psi(\boldsymbol{x},\boldsymbol{y})$ with respect to variable vector $\boldsymbol{x}$. Since $\alpha>0$ and $\boldsymbol{J}$ is a positive semi-definite matrix, this shows that $\nabla^2_{\boldsymbol{x}\boldsymbol{x}}\Psi(\boldsymbol{x},\boldsymbol{y})>0$, which implies $\Psi(\boldsymbol{x},\boldsymbol{y})$ is a convex function of $\boldsymbol{x}$. Finally using \eqref{eq:utility-modified}, one can easily check that the equality in \eqref{eq:strong-concave} holds, which shows that $\Psi(\boldsymbol{x},\boldsymbol{y})$ is strongly concave with respect to its second argument $\boldsymbol{y}$.
\end{proof}
}

\section{Proof of Theorem \ref{thm:convergence-rate}}\label{apx:thm:convergence-rate}
\vspace{-0.15cm}

We show that $\lim_{t\to \infty}\boldsymbol{x}(t)=\boldsymbol{x}^*$, from Algorithm \ref{alg:relaxing}, where $\boldsymbol{x}^*$ is a pure-strategy NE for the prosumers. To show that, we measure the distance of an action profile $\boldsymbol{x}(t)$ and its best response $\Pi_{\Omega}\big[\boldsymbol{a}+\boldsymbol{A}\boldsymbol{x}(t)\big]$ using the Nikaido-Isoda function and show that this distance decreases as $t$ becomes large. In particular, we show that at the limit, this distance equals zero which shows that the limit point is an NE of the game. 

\vspace{-0.1cm}
\begin{small}
\begin{align}\label{eq:convexity}
&\Psi(\boldsymbol{x}(t\!+\!1),\boldsymbol{x}^{r}(t\!+\!1))=\Psi\Big((1\!-\!\frac{1}{\sqrt{t}})\boldsymbol{x}(t)\!+\!\frac{\boldsymbol{x}^{r}(t)}{\sqrt{t}},\boldsymbol{x}^{r}(t\!+\!1)\Big)\cr 
&\qquad\leq(1\!-\!\frac{1}{\sqrt{t}})\Psi\Big(\boldsymbol{x}(t),\boldsymbol{x}^{r}(t)\Big)+\frac{1}{\sqrt{t}}\Psi\Big(\boldsymbol{x}^{r}(t),\boldsymbol{x}^{r}(t+1)\Big). 
\end{align}
\end{small}Using the first part of Lemma \ref{lemm:Nikaido}, we have 

\vspace{-0.1cm}
\begin{small}
\begin{align}\label{eq:two-best-distance} 
&\Psi\Big(\boldsymbol{x}^{r}(t),\boldsymbol{x}^{r}(t+1)\Big)\leq K \|\boldsymbol{x}^{r}(t)-\boldsymbol{x}^{r}(t+1)\|\cr 
&\qquad=K\|\Pi_{\Omega}\big[\boldsymbol{a}+\boldsymbol{A}\boldsymbol{x}(t)\big]-\Pi_{\Omega}\big[\boldsymbol{a}+\boldsymbol{A}\boldsymbol{x}(t+1)\big]\|\cr 
&\qquad\leq K\|[\boldsymbol{a}+\boldsymbol{A}\boldsymbol{x}(t)]-[\boldsymbol{a}+\boldsymbol{A}\boldsymbol{x}(t+1)]\|\cr 
&\qquad\leq K\|\boldsymbol{A}\|\|\boldsymbol{x}(t)-\boldsymbol{x}(t+1)\|,\cr 
&\qquad=\frac{K(n-1)}{2\sqrt{t}}\|\boldsymbol{x}(t)-\boldsymbol{x}^{r}(t)\|.
\end{align}\end{small}where the first inequality is due to the nonexpansive property of the projection operator, the second inequality uses the matrix norm inequality, and the last equality is obtained by replacing the expression for $\boldsymbol{x}(t+1)$ and noting that the induced 2-norm of matrix $\boldsymbol{A}$ equals $\frac{n-1}{2}$. Substituting \eqref{eq:two-best-distance} into \eqref{eq:convexity} we have
\begin{align}\nonumber
\Psi\Big(\boldsymbol{x}(t+1),\boldsymbol{x}^{r}(t+1)\Big)&\leq(1-\frac{1}{\sqrt{t}})\Psi\Big(\boldsymbol{x}(t),\boldsymbol{x}^{r}(t)\Big)\cr 
&\qquad+\frac{K(n-1)}{2t}\|\boldsymbol{x}(t)-\boldsymbol{x}^{r}(t)\|. 
\end{align}
Since $\Psi(\boldsymbol{x}(t),\boldsymbol{x}(t))=0$, we can write

\vspace{-0.1cm}
\begin{small}
\begin{align}\nonumber 
&\Psi\Big(\boldsymbol{x}(t+1),\boldsymbol{x}^{r}(t+1)\Big)\leq (1-\frac{1}{\sqrt{t}})\Psi(\boldsymbol{x}(t),\boldsymbol{x}^{r}(t))\cr 
&\qquad+\frac{1}{\sqrt{t}}\Psi(\boldsymbol{x}(t),\boldsymbol{x}(t))+\frac{K(n-1)}{2t}\|\boldsymbol{x}(t)-\boldsymbol{x}^{r}(t)\|\cr 
&\qquad =\Psi\left(\boldsymbol{x}(t),(1-\frac{1}{\sqrt{t}})\boldsymbol{x}^{r}(t)+\frac{1}{\sqrt{t}}\boldsymbol{x}(t)\right)\cr
&\qquad-\alpha(1\!-\!\frac{1}{\sqrt{t}})\frac{1}{\sqrt{t}}\|\boldsymbol{x}(t)-\boldsymbol{x}^{r}(t)\|^2\!+\!\frac{K(n-1)}{2t}\|\boldsymbol{x}(t)-\boldsymbol{x}^{r}(t)\|\cr 
&\qquad\leq \Psi(\boldsymbol{x}(t),\boldsymbol{x}^{r}(t))-\frac{\Psi^2(\boldsymbol{x}(t),\boldsymbol{x}^{r}(t))}{\frac{2K^2}{\alpha}\sqrt{t}}+\frac{K(n-1)D}{2t}, 
\end{align}\end{small}where the first equality is due to Lemma \ref{lemm:Nikaido}, and the last inequality is due to first part of Lemma \ref{lemm:Nikaido} and the fact that $\boldsymbol{x}^{r}(t)$ maximizes $\Psi(\boldsymbol{x}(t),\cdot)$. Multiplying both sides of the above inequality by $\frac{\alpha}{2K^2\sqrt{t}}$ and defining $c:=\frac{\alpha(n-1)D}{4K}$ and $a_t:=\frac{\alpha}{2K^2\sqrt{t}}\Psi(\boldsymbol{x}(t),\boldsymbol{x}^{r}(t))$, we get  
\begin{align}\label{eq:recursive}
a_{t+1}\leq a_t-a^2_t+\frac{c}{t\sqrt{t}}.
\end{align}

Our goal is to show that $a_t<\sqrt{2c}\times t^{-\frac{3}{4}}$ for all $t\ge \frac{100}{c^2}$, in which case by definition of $a_t$ we obtain $\Psi(\boldsymbol{x}(t),\boldsymbol{x}^{r}(t))=O(t^{-\frac{1}{4}})$. This not only shows that $\lim_{t\to \infty}\Psi(\boldsymbol{x}(t),\boldsymbol{x}^{r}(t))=0$, implying that $\{\boldsymbol{x}(t)\}$ converges to a pure strategy NE of the prosumers game (note that $\Psi(\boldsymbol{x},\boldsymbol{x}^{r})=0$ if and only if $\boldsymbol{x}$ is a NE), but it also shows that after $t$ steps, the action profile of the prosumers $\boldsymbol{x}(t)$ is an $\epsilon$-NE of the game where $\epsilon=O(t^{-\frac{1}{4}})$ (this is due to $\Psi(\boldsymbol{x}(t),\boldsymbol{x}^{r}(t))=O(t^{-\frac{1}{4}})$ implies $U_n(\boldsymbol{x}^{r}_n(t),\boldsymbol{x}_{-n}(t),\rho_{\textrm{base}})-U_n(x_n(t),\boldsymbol{x}_{-n}(t),\rho_{\textrm{base}})=O(t^{-\frac{1}{4}})$ for all $n\in\mathcal{N}$, meaning that given the action profile $\boldsymbol{x}(t)$, no prosumer can increase its utility by more that $O(t^{-\frac{1}{4}})$ by playing its best response). 

We complete the proof using induction on $t$ to show that $a_t<\sqrt{2c}\times t^{-\frac{3}{4}}$. Assume that this relation is true for $t$. Then
{\small
\begin{align}\nonumber
a_{t+1}&\leq a_t-a^2_t+\frac{c}{t\sqrt{t}}\cr 
&\leq \sqrt{2c}t^{-\frac{3}{4}}-2ct^{-\frac{3}{2}}+\frac{c}{t\sqrt{t}}\cr 
&=\sqrt{2c}t^{-\frac{3}{4}}-ct^{-\frac{3}{2}}.
\end{align}}
Let $f(z):[1,\infty)\to\mathbb{R}$ be a function defined by $f(z)=\sqrt{2c}z^{-\frac{3}{4}}-\sqrt{2c}(z+1)^{-\frac{3}{4}}-cz^{-\frac{3}{2}}$. We only need to show that $f(z)<0,$ for $t\ge \frac{100}{c^2}$. By writing the Taylor expansion of the first two terms of $f(z)$ for $z\ge 1$, we have $f(z)\leq 7\sqrt{2c}z^{-\frac{7}{4}}-cz^{-\frac{3}{2}}$, which is less than $0$ for $t\ge \frac{100}{c^2}$. This completes the induction and shows that $a_t=O(t^{-\frac{3}{4}})$.

\section{Proof of Theorem \ref{thm:PT_Cases}}\label{apx:thm:PT_Cases}
We first  find the conditions under which the expected utility function is uniform over each prosumer's action space. We then find the additional condition to ensure that the function is strictly concave.

\smallskip
\indent{\bf Case 1:} To have  $R_n < \rho_{\textrm{min}} c + d$, for all of prosumer $n$'s actions, we first rewrite the inequality in terms of $x_n$:
\begin{align}\nonumber
-\alpha x_n^2 + \left( \rho_{\textrm{min}} - \rho_b - \alpha \bar{x}_{-n} \right) x_n + \left( \rho_{\textrm{min}}K_n - R_n \right) < 0,
\end{align}
 where $k_n = W_n + Q_n - L_n$. By analyzing the second order polynomial, and its roots, $r_1$ and $r_2$, we get the condition for case $1$.
Under such a condition, the expected utility function of prosumer $n$ under PT, simplifies  to $U^{\textrm{PT}}_{n}(x_n,\bar{x}_{-n},\rho_{\textrm{base}}) = U^{\textrm{EUT}}_{n}(x_n,\bar{x}_{-n},\rho_{\textrm{base}}) - R_n$. This is clearly a concave function, given that,  $U^{\textrm{EUT}}_{n}(x_n,\bar{x}_{-n},\rho_{\textrm{base}})$ has been shown to be concave, and $R_n$ is a constant.

\smallskip
\indent{\bf Case 2:} In order to have  $R_n > \rho_{\textrm{max}} c + d$, for all of prosumer $n$'s actions, we follow a similar approach in order to find the condition for case $2$. Under this condition, the expected utility function under PT simplifies  to $U^{\textrm{PT}}_{n}(x_n,\bar{x}_{-n},\rho_{base}) = \lambda (U^{\textrm{EUT}}_{n}(x_n,\bar{x}_{-n},\rho_{\rm base}) - R_n)$, which is also strictly concave, given that $\lambda$ is strictly positive.

\smallskip
\indent{\bf Case 3:} To have  $ \rho_{\textrm{min}} c + d   < R_n < \rho_{\textrm{max}} c + d$, for all of prosumer $n$'s actions, we follow a similar approach in order to find the condition for case $3$. We next analyze the concavity of the expected utility function, given in the second line in (\ref{eq:PT}). The second derivative is given by:

\begin{small}
\begin{align}\nonumber
\frac {\partial U_{n,PT}} {\partial^2 x_n} = \frac{a_1}{m_1}x_n - \frac{a_1m_1-bm_1}{m_1^2} - \frac{(\lambda-1)a_2^2}{(Q_n - L_n + W_n + x_n)^3},
\end{align}\end{small}

where $a_2 = (R_n - (k_n \rho_{\textrm{base}}) + \alpha(L_n^2  + Q_n^2  +  w_n^2) - 2 L_n Q_n \alpha + L_n \alpha\bar{x}_{-n} - Q_n \alpha \bar{x}_{-n} - 2 L_n \alpha w_n + 2 Q_n\alpha w_n$. Note that $-\frac{(\lambda-1)a_2^2}{(Q_n - L_n + W_n + x_n)^3}$ is negative for all $x_n$. Next, we find the range of $x_n$ for which $\frac{a_1}{m_1}x_n - \frac{a_1m_1-b_1m_1}{m_1^2}$ is negative as well. Given that $\frac{a_1}{m_1}$ is negative, the utility function is thus guaranteed to be concave for $x_n > \frac{a_1m_1-b_1m_1}{a_1m_1}$.

\end{document}